\long\def\jnt#1{{#1}}
\def\old#1{}
\newcommand{\be}{\begin{equation}}
\newcommand{\ee}{\end{equation}}
\newcommand{\bQ}{\mathbf{Q}}
\newcommand{\E}{\mathbb{E}}
\newcommand{\PR}{\mathbb{P}}
\newcommand{\N}{\mathbb{N}}
\newcommand{\bsigma}{{\boldsymbol{\sigma}}}
\newcommand{\sS}{{\cal{S}}}
\newcommand{\bOne}{\mathbf{1}}
\newtheorem{theorem}{Theorem}[section]
\newtheorem{lemma}[theorem]{Lemma}
\newtheorem{corollary}[theorem]{Corollary}
\providecommand{\boldsymbol}[1]{\mbox{\boldmath $#1$}}
\newcommand\T{\rule{0pt}{4ex}}       
\newcommand\B{\rule[-3ex]{0pt}{0pt}}
\begin{document}
\begin{frontmatter}

\title{On Queue-Size Scaling for Input-Queued Switches}
\maketitle

\begin{aug}
\author{\fnms{D.} \snm{Shah}\ead[label=e1]{devavrat@mit.edu}}
  \and
  \author{\fnms{J.~N.} \snm{Tsitsiklis}\ead[label=e1]{jnt@mit.edu}}
  \and
  \author{\fnms{Y.} \snm{Zhong}\thanksref{ds.ack}\ead[label=e1]{zhyu4118@mit.edu}}
    \runauthor{Shah \& Tsitsiklis \& Zhong}
  \affiliation{Massachusetts Institute of Technology}
\thankstext{ds.ack}{May 18, 2014. This work was supported by NSF \jnt{grants} CCF-0728554 \jnt{and CMMI-1234062. This research was performed while} all authors were affiliated with the Laboratory for Information and Decision Systems as well as the Operations Research Center at MIT. 
\jnt{The third author is currently with the IEOR department, at Columbia University.}
Current emails: {\tt \{devavrat,jnt\}@mit.edu}, {\tt {yz2561@columbia.edu.}}}
\end{aug}

\vspace{.1in}

\begin{abstract}

We \jnt{study} the optimal scaling of the \jnt{expected} {total} queue size in an $n\times n$ input-queued switch, as a function of the \jnt{number of ports} $n$ and the load factor $\rho$, which \jnt{has} been conjectured to be $\Theta(n/\jnt{(}1-\rho\jnt{)})$ (cf.\ \cite{STZopen}). In a recent work \cite{SWZ11}, the validity of this conjecture has been established \jnt{for the regime where} $1-\rho = O(1/n^2)$. In this paper, we make \jnt{further} progress \jnt{in the direction of} this conjecture. We provide a new class of scheduling policies under which the  {expected} total queue size scales as $O\big(n^{1.5} (1-\rho)^{-1} \log \big(1/(1-\rho)\big)\big)$ \jnt{when} $1-\rho = O(1/n)$. \jnt{This is an improvement over the state of the art;} for example, for $\rho = 1-1/n$ the best known bound was $O(n^3)$, while ours is $O(n^{2.5} \log n)$.

\end{abstract}

\begin{keyword}[class=AMS]
\kwd[Primary ]{60K20}
\kwd{68M12}
\kwd[; Secondary ]{68M20}
\end{keyword}

\begin{keyword}
\kwd{input-queued switch}
\kwd{queue-size scaling}
\end{keyword}

\end{frontmatter}

\section{Introduction}\label{sec:model}

An input-queued switch is a popular and commercially available architecture
for scheduling data packets in an  internet router. 
In general, an input-queued switch maintains a number of virtual queues 
to which packets arrive. 
\jnt{Packets to be served at each time slot are selected according to a scheduling policy, subject to} system constraints 
that specify which queues can be served simultaneously. 

The input-queued switch model is an important example of so-called 
``stochastic processing networks,'' formalized by Harrison {\cite{harrison:canonical, harrison:canonical.corr}}, 
which \jnt{have} become a canonical model \jnt{of} a variety of dynamic 
resource allocation scenarios. 
While the most basic questions concerning throughput and stability\footnote{\jnt{Under the definition that we adopt,} 
the system is {\it stable} if the \jnt{expected queue sizes are bounded over time. Furthermore,}
a policy is {\it throughput optimal} if the system is stable whenever 
\jnt{there exists some policy under which the system is stable.}
}
\jnt{are} relatively well-understood \jnt{for} general  stochastic processing networks {(see e.g., \cite{LD05}, \cite{K-M}, \cite{KW04}, \cite{daibala}, \cite{TE92}, \cite{MAW96}, \cite{DLK01})}, 
\jnt{much less is known on the subject of {more refined performance measures} (e.g., results on the distribution and the} 
moments of queue sizes),  
even \jnt{for} the special context of input-queued switches.

This paper contributes \jnt{to the performance analysis of} stochastic processing networks. \jnt{It} is motivated by the conjectures put forth in \cite{STZopen} 
\jnt{on} the optimal scaling of the \jnt{expected} total queue size 
in an $n\times n$ input-queued switch, 
as a function of the \jnt{number of ports} $n$ and the load factor $\rho$. 
\jnt{For certain} limiting regimes, it was conjectured \jnt{in} \cite{STZopen} that 
the optimal scaling \jnt{(that is, the scaling under an ``optimal'' policy)} takes the form $\Theta\left(n/(1-\rho)\right)$. \jnt{This is to be compared to available results that}
include an  $\jnt{O(}n^2/(1-\rho))$ upper bound, 
achieved by the so-called Maximum-Weight policy \cite{shahkopi}, \cite{leonardi-delay}, 
and an  $O(n\log n/(1-\rho)^2)$ upper bound, achieved by 
a {batching} policy proposed in \cite{NMC07}. 
\jnt{More} recently, Shah et al.\ \cite{SWZ11} proposed a policy that gives an 
upper bound of
$\frac{n}{1-\rho} + n^3$, thus \jnt{establishing the validity of the} conjecture when $1-\rho = O(1/n^2)$.

In this paper, we 
\jnt{focus on a different regime, where} 
\jnt{$1/n^2 \ll 1-\rho \leq 1/n$.} 
\jnt{In some sense, this is a more difficult regime to analyze, when compared to the regime where $1-\rho = O(1/n^2)$. This is because we consider a larger ``gap'' $1-\rho$, and so the heavy-traffic aspects of the system are less pronounced. This in turn means that various laws of large numbers (e.g., fluid or batching arguments) are less effective.} 

Concretely, we shall focus on \jnt{the case} $\rho = 1 - 1/\jnt{f_n}$, \jnt{where $f_n \geq n$ for all $n$, and for $n$ tending to infinity.} When $f_n = n$, previous works 
give an upper bound $O(n^3)$ (ignoring poly-logarithmic dependence on $n$) 
on the \jnt{expected} total queue size. 
In contrast, when $\rho = 1 - 1/n$, the conjectured optimal scaling $O\left(n/(1-\rho)\right)$ \jnt{is of the form}
$O(n^2)$. 
It is then natural to ask whether this gap can be reduced, i.e., 
whether there exists a policy under which 
the \jnt{expected} total queue size is upper bounded by 
$O(n^{\alpha})$, with $\alpha < 3$ (and ideally \jnt{with} $\alpha=2$), 
when $\rho = 1- 1/n$. 

Our main contribution is a new policy
\jnt{that leads to an upper bound of}
$O\big(n^{1.5} f_n \log f_n\big)$, 
\jnt{when $f_n\geq n$ and the arrival rates at the different queues are all equal.}
As a corollary, \jnt{if} $f_n = n$, 
the \jnt{expected} total queue size 
is upper bounded by $O(n^{2.5}\log n)$. 
This is the best known scaling with respect to $n$, when $\rho = 1 - 1/n$. 
While this is a significant improvement over existing bounds, 
we \jnt{still} believe that the right scaling \jnt{(ignoring any poly-logarithmic factors)} is $O(n^2)$. 
The \jnt{best currently} known scalings on the {expected} total queue size 
under various regimes, 
in an $n\times n$ input-queued switch, are summarized in Table \ref{tab:bounds}.

\begin{table}[h]
\begin{center}
\caption{Best known scalings \jnt{of the expected total queue size} in various regimes. \jnt{Here, $\rho$ is the} load factor \jnt{and $n$ is the} number of \jnt{input} ports.}
\vspace{.5cm}
\begin{tabular}{|c|c|c|}
\hline
Regime & Scaling & References \rule{0pt}{2.6ex} \rule[-1.2ex]{0pt}{0pt}
\\
\hline
$\frac{1}{1-\rho} < n$ & $O\left(\frac{n \log n}{(1-\rho)^2}\right)$ & \cite{NMC07} \T\B
 \\
\hline
$\frac{1}{1-\rho} = n$ & $O\left(n^{2.5}\log n\right)$ & this work \T\B\\
\hline
$n\leq \frac{1}{1-\rho} < n^2$ & $O\left(\frac{n^{1.5}\log n}{1-\rho}\right)$ & this work \T\B\\
\hline
$\frac{1}{1-\rho} \geq n^2$ & $\Theta\left(\frac{n}{1-\rho}\right)$ & \cite{SWZ11} \T\B\\
\hline
\end{tabular}
\label{tab:bounds}
\end{center}
\end{table}

The policy that we propose is a variation of the standard batching policy. 
In the standard batching policy, time
is divided into disjoint intervals or batches. 
Packets that arrive in a given batch are served only after the arrival of the entire batch. 
By choosing the batch length large enough (deterministically or randomly), the  total \jnt{number of arriving packets is close to its expected value and} can be served efficiently. 
\jnt{In general, a longer batching interval improves efficiency, 
because the effect of random fluctuations is less pronounced, but on the other hand leads to larger delays and queue sizes.}
 \jnt{For this reason,} a good batching policy, \jnt{as} for 
example in \cite{NMC07}, selects the \jnt{smallest} possible batch  {length} that 
\jnt{will guarantee stability;} 
in 
\cite{NMC07}, this led to \jnt{a} bound of $O\left(\frac{n \log n}{(1-\rho)^2}\right)$ on \jnt{the expected total}  queue size. 

\jnt{Given the stability requirement, we cannot hope to improve delay by reducing the batch
 {length}. On the other hand, the policy that we consider}
starts serving \jnt{packets} from a given batch 
a lot earlier, 
\jnt{before the arrival of the entire batch.} By starting to serve early, the \jnt{expected}  
delay (and hence queue size) is reduced. \jnt{When the arrival rates at each queue are all equal,} 
we show that the arrival process has sufficient regularity at \jnt{a} time scale shorter than 
the batch  {length}. 
\jnt{Consequently,} the policy can {indeed} start serving the arriving \jnt{packets} early,
\jnt{while making sure that the stochastic fluctuations lead to only a small number of unserved packets, which can be ``cleared'' efficiently at the end of the batch.}
\jnt{The combination of these ideas} results in substantial 
improvement over the standard batching policy. 


\jnt{A} few remarks are in order regarding the proposed policy. {Our} policy \jnt{relies on {the} assumption of {uniform} arrival rates}.  In contrast, \jnt{some} existing policies, such as the maximum weight policy
or \jnt{the one} in \cite{SWZ11}, are based {only} on the observed system state (the queue \jnt{sizes}) \jnt{and are effective even with non-uniform arrival rates.} 
\jnt{However,} we believe that our policy and its analysis 
can be modified to account for general \jnt{(non-uniform)} arrival rates.

\subsection{Organization}

The rest of the paper is organized as follows.  In Section \ref{sec:model}, we describe the 
input-queued switch model.
In Section \ref{sec:main}, we state our main theorem.  In Section \ref{sec:prelim}, 
we introduce some preliminary 
facts and theorems, which \jnt{will} be used in later sections. In Section \ref{sec:policy}, we describe our policy. In 
Section \ref{sec:analysis}, we provide the proof of the main theorem. We conclude with some discussion in Section \ref{sec:discussion}.

\section{Input-queued switch model}\label{sec:model}

An $n\times n$ 
input-queued switch has $n$ input ports and $n$ output ports. 
The switch operates in discrete time, indexed by $\tau \in \jnt{\{1, 2, \dots\}}$.
In each time slot, and for \jnt{each} port pair $(i,j)$, \jnt{a} unit-sized \jnt{packet} may arrive at input port $i$ destined
for output port $j$, according to an exogenous arrival process. Let $A_{i,j}(\tau)$
denote the cumulative number of such arriving packets \jnt{during time slots  $1,\ldots,\tau$.} 
We assume that the processes $A_{i,j}(\cdot)$ are independent
for different pairs $(i,j)$. Furthermore, for every input-output pair $(i,j)$, 
{$\{A_{i,j}(\tau) - A_{i,j}(\tau-1)\}_{\tau\in \N}$} is a Bernoulli process 
with parameter \jnt{$\rho/n$}{, with the convention that $A_{i,j}(0)=0$.} In particular, 
$$
\jnt{\E[}A_{i,j}(\tau)] = {\jnt{\frac{\rho}{n}\tau}},\quad 
\jnt{\mbox{for \ all\ }i,j,\ \mbox{and all }\tau\geq 1.}
$$
\jnt{We are only interested in systems that can be made stable under a suitable policy,  and for this reason, we assume that $\rho<1$, i.e., that the system is underloaded.}
\jnt{Furthermore,} we consider \jnt{a} system load $\rho$ \jnt{of the form} $\rho = 1 - 1/f_n$, 
where \jnt{the sequence \{$f_n$\}  satisfies $f_n \geq n$ for all $n$.}  

For every input-output pair $(i,j)$, the associated arriving packets are stored in separate
queues, so that we have a total of $n^2$ queues. Let $Q_{i,j}(\tau)$ be 
the number of packets waiting at input port $i$, destined for output {port} $j$, at the 
beginning of time slot $\tau$. 

\old{We remark here that, in the sequel, we will often consider the total number of packets that are waiting 
at an input port, or that are destined for the same output port. 
More precisely, suppose that we have a queue matrix $\bQ = \left(Q_{i,j}\right)$. 
For each $i$, let $R_i = \sum_{j'=1}^n Q_{i,j'}$ be the $i$th row sum, 
and for each $j$, let $C_j = \sum_{i'=1}^n Q_{i',j}$ be the $j$th column sum. 
Then $R_i$ is also the total number of packets that are waiting at input port $i$, 
and, similarly, $C_j$ the total number of packets that are destined for output port $j$. 
For convenience, we sometimes use ``row sum'' to mean the total number 
of packets that are waiting at an input port, and ``column sum'' to mean 
the total number of packets that are destined for an output port. }

In each time slot, the switch can transmit a number of packets 
from input ports to output ports, subject to the following two constraints: 
(i)  each input port can transmit at most one packet; and,  
(ii) each output port can receive at most one packet. In other words, 
the actions of a switch at a particular time slot constitute a
{\it matching} between input 
and output ports.

A matching, or {\it schedule}, can be described by \jnt{an array} 
$\bsigma\in\{0,1\}^{n\times n}$, where $\sigma_{i,j}=1$ if 
input port $i$ is matched to output port $j$, and $\sigma_{i,j}=0$ otherwise. 
Thus, \jnt{at any given time,} the set of all feasible schedules is
\[
\sS = \Big\{\bsigma \in \{0,1\}^{n \times n} : \sum_k \sigma_{i,k} \leq 1, ~\sum_k \sigma_{k, j} \leq 1, ~\forall \ (i,j) \mbox{ with }1\leq i, j \leq n\Big\}.
\]
A scheduling policy (or simply {\it policy}) is a rule that, at any given time $\tau$, chooses a schedule $\bsigma(\tau) = [\sigma_{i,j}(\tau)] \in \sS$, based on \jnt{the} past history and the current \jnt{queue sizes $Q_{i,j}(\tau)$.} 
If $\sigma_{i,j}(\tau) = 1$
and $Q_{i,j}(\tau) > 0$, then one packet is removed from the queue associated with the pair $(i,j)$.

Regarding the details of the model, we adopt the following \jnt{timing} conventions.
At the beginning of time slot $\tau$, the queue \jnt{sizes $Q_{i,j}(\tau)$ are} observed  by the policy. The schedule $\bsigma(\tau)$ is applied
in the middle of the time slot. Finally, at the end of the time slot, new arrivals happen. Mathematically, for all
$i$, $j$, and $\tau \in \N$, we have
\begin{equation}
Q_{i,j}(\tau+1) = Q_{i,j}(\tau) - \sigma_{i,j}(\tau) \bOne_{{\{}Q_{i,j}(\tau) > 0{\}}} + A_{i,j}(\jnt{\tau}) - A_{i,j}(\tau\jnt{-1}),
\label{eq:dyn}
\end{equation}
where for a set $B$, $\bOne_{B}$ is its indicator function. 
We assume throughout the paper that the system starts empty, 
i.e., $Q_{i,j}(\jnt{1}) = 0$, for all $i, j$. 

Summing Eq. (\ref{eq:dyn}) over time {and \jnt{using the assumption} $Q_{i,j}(1) = 0$, we get the following equivalent expression, for $\tau \in \N$:
\begin{equation}
Q_{i, j}(\tau\jnt{+1}) =A_{i, j}(\tau) - \sum_{t = \jnt{1}}^{\jnt{\tau}} \sigma_{i,j}(t)\bOne_{{\{Q_{i,j}(t) > 0\}}}.
\label{eq:dyncum}
\end{equation}
\jnt{We define} 
$$S_{i,j}(\tau)=
\sum_{t = \jnt{1}}^{\jnt{\tau}} \sigma_{i,j}(t)\bOne_{\{Q_{i,j}(t) > 0\}},$$
so that (\ref{eq:dyncum}) reduces to
\[
Q_{i,j}(\tau\jnt{+1}) = A_{i,j}(\tau) - S_{i,j}(\tau).
\]
We call $S_{i,j}(\tau)$ the \jnt{\it actual}\/ service offered to queue $(i,j)$
\jnt{during the first $\tau$ time slots. Note that}
$S_{i, j}(\tau)$ \jnt{may be} different from $\sum_{t = 1}^{\jnt{\tau}} \sigma_{i,j}(t)$, \jnt{which is} 
the cumulative service {\it offered} to queue $(i,j)$ \jnt{during the first $\tau$ slots.}

\section{Main Result}\label{sec:main}

\jnt{The main result of this paper is as follows.}

\begin{theorem}\label{thm:main}
Consider an $n\times n$ input-queued switch 
\jnt{in which} the arrival processes are independent Bernoulli processes 
\jnt{with a common} arrival rate $\rho/n$, \jnt{where} $\rho = 1 - 1/f_n$ \jnt{and} $f_n \geq n$. 
\jnt{For any {$n$}, there exists} a scheduling policy under which the \jnt{expected} total queue size 
is upper bounded by $cn^{1.5} f_n \log f_n$. That is, 
\[
\sum_{i,j=1}^n \jnt{\E[}Q_{i,j}(\jnt{\tau})]
\leq cn^{1.5} f_n\log f_n, \qquad \jnt{\mbox{for all }\tau,}
\]
where $c$ is a constant that does not depend on $n$.
\end{theorem}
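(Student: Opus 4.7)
My plan is to introduce and analyze a batched policy consisting of two phases within each batch: a state-oblivious ``preview'' phase in which packets are served on the fly using a fixed rotation of permutations, followed by a state-dependent ``cleanup'' phase in which a Birkhoff--von Neumann (BvN) decomposition drains the residual fluctuations. Concretely, fix a batch length $T$ and a decomposition $J = P_1 + \cdots + P_n$ of the all-ones matrix into $n$ permutations (e.g., the $n$ cyclic shifts). Within each batch, designate the first $T_1$ slots as the preview phase and the remaining $T_2 = T - T_1$ slots as the cleanup phase. During the preview phase, apply $P_{t \bmod n}$ at slot $t$, \emph{ignoring} the queue state, so that every queue $(i,j)$ receives exactly one service opportunity every $n$ slots. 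At the start of the cleanup phase, inspect the residual queue matrix, BvN-decompose it, and use the resulting matchings as schedules, iterating (geometric series in $\rho$) so that the new arrivals that continue to accumulate are also absorbed.

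Because the preview service is deterministic and queue-independent, and the arrivals are independent Bernoulli$(\rho/n)$, the $n^2$ queues evolve as \emph{independent} discrete-time queues, each with arrival rate $\rho/n$ and one deterministic service every $n$ slots. A coupling with a reflected random walk with drift $-(1-\rho)/n$ and variance parameter $\rho/n$, together with a Lindley/Doob maximal inequality argument, gives $\E[Q_{i,j}(t)] = O(\sqrt{t/n})$ in the transient regime $t = o(n f_n^2)$, starting from an empty system. Summing over queues,
\[
\sum_{i,j} \E[Q_{i,j}(t)] \;=\; O\!\bigl(n^{3/2}\sqrt{t}\bigr) \qquad \text{for all } t \leq T_1.
\]

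For the cleanup phase, Chernoff plus a union bound over the $2n$ rows and columns shows that the row and column sums of the residual matrix at the end of preview are all $O(\sqrt{n T_1 \log n})$ with high probability (this is where the uniformity of arrival rates is crucial, so that row/column sums concentrate about equal means). Hence the BvN decomposition of the residual is supported on $O(\sqrt{n T_1 \log n})$ matchings, and accounting for the new arrivals during cleanup via a geometric iteration shows that $T_2 = O(f_n \sqrt{n T_1 \log n})$ slots suffice to drain the residuals. The total queue size during cleanup never exceeds its preview-end peak, $O(n^{3/2} \sqrt{T_1})$. A batch-to-batch stability argument, treating the end-of-cleanup residual as the initial condition for the next preview, keeps the total queue size within the same order uniformly in $\tau$.

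Choosing $T_1 = \Theta(f_n^2 \log^2 f_n)$ yields $\sqrt{T_1} = O(f_n \log f_n)$, and hence $\sum_{i,j}\E[Q_{i,j}(\tau)] = O(n^{3/2} f_n \log f_n)$ uniformly in $\tau$, as desired. The hardest part, I expect, is making the cleanup rigorous: BvN is applied to a \emph{random} integer matrix, and one must simultaneously control (i) the preview residuals, (ii) the new arrivals during cleanup, and (iii) the carry-over to subsequent batches, all without letting the fluctuations cascade. A secondary subtlety is to sharpen the concentration estimates so the logarithmic factor in the final bound becomes $\log f_n$ rather than a $\log n$ naturally produced by the union bound, and to convert what is most naturally a time-averaged statement into the claimed pointwise $\sup_\tau$ bound.
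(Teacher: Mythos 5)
The high-level architecture you propose---a state-oblivious round-robin phase followed by a state-dependent clearing phase---is indeed the one the paper uses, and your preview-phase analysis via independent $G/D/1$ queues is a legitimate alternative to the paper's. But there are two structural choices in the paper that you have not made, and they are precisely what makes the argument close; your version, as sketched, has a genuine gap at the cleanup step.

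First, the paper \emph{delays} the start of service for the $k$th batch by $d = \Theta(\sqrt{n}\,f_n\log f_n)$ slots, so that by the time round-robin starts every queue holds $\Theta(d/n)$ packets from the current batch. With this buffer, Lemmas~\ref{l:aa}--\ref{l:w} show that during round-robin no offered service is ever wasted, with probability $1-O(f_n^{-13})$. Consequently the per-row residual at the end of round-robin is bounded \emph{deterministically} on the good event by $s-(b-d) = \ell$, and its analysis requires no maximal-inequality machinery. Your version serves from the start, so service \emph{is} wasted, and you rely on a transient reflected-random-walk bound $\E[Q_{i,j}(t)]=O(\sqrt{t/n})$. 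That estimate is plausible and gives a residual of the same order, but it is substantially harder to make tight (you already flag the extra $\log$ factors), and it breaks down if the system does not start each preview empty---which brings us to the second point.

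Second, and more importantly, the paper's ``normal clearing phase'' does \emph{not} serve any packets from the $(k+1)$st batch. It only clears the $k$th batch's residual, whose row/column sums are at most $\ell$ on the good event, so by Theorem~\ref{thm:bvn} clearing finishes in exactly $\ell$ slots, deterministically, with no fluctuation issues. Your cleanup phase instead ``absorbs'' the arrivals that continue to come in, via a geometric iteration. This is where the proof does not close. Writing $L_k$ for the max row/column sum at the start of round $k$ of your iteration, one has
\[
L_{k+1} \;\approx\; \rho\,L_k \;+\; c\sqrt{L_k\,\log n},
\]
because new arrivals over $L_k$ slots have mean $\rho L_k$ per row and the max over $2n$ rows and columns contributes an additive $\Theta(\sqrt{L_k\log n})$. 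This recursion does not contract to zero: it has a stable fixed point at $L^\ast = \Theta\bigl(\log n/(1-\rho)^2\bigr) = \Theta(f_n^2\log n)$, and for your choice $L_0 = \Theta(\sqrt{n T_1}) = \Theta(\sqrt{n}\,f_n\log f_n)$ one has $L_0 < L^\ast$, so the iterates in fact \emph{increase} toward $L^\ast$. The residual is never drained; it drifts toward $\Theta(f_n^2\log n)$ per row/column, i.e.\ $\Theta(n f_n^2\log n)$ packets total, which is no better than the Neely--Modiano--Cheng bound and, for $f_n=n$, is $\Theta(n^3\log n)$---worse than your target. This is not a loose constant or a missing $\log$: it is an order-of-magnitude failure of the mechanism as you have described it.

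Finally, the paper completes the argument with a third phase (backlog clearing) and a Kingman-bound analysis (Lemma~\ref{l:back}) that handles the rare events where the good events $W^k$ or $H^k$ fail, showing $\E[B_k]\le 1$. Your ``batch-to-batch stability argument'' is asserted but not given; it cannot be routine, because the preview-phase random-walk bound explicitly assumes an empty start, and a nonzero residual from a failed cleanup cannot be amortized by the preview phase (the net preview drift per row is only $(1-\rho)T_1=\Theta(f_n\log^2 f_n)$, far smaller than $L^\ast$). In short: keep the preview idea if you like, but to make the theorem go through you need (i) to stop serving the next batch's arrivals during cleanup, so that the clearing time is bounded by the residual itself rather than by a nonconvergent geometric iteration, and (ii) an explicit, Kingman-style treatment of the (low-probability) carryover. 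These are exactly the two ingredients the paper supplies.
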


\begin{corollary}\label{cor:main}
Consider the setup in Theorem \ref{thm:main}, \jnt{with} $f_n = n$. 
\jnt{For any {$n$}, there exists} a scheduling policy under which the \jnt{expected} total queue size 
is upper bounded by $cn^{2.5} \log n$. That is, 
\[
\sum_{i,j=1}^n \jnt{\E[}Q_{i,j}(\jnt{\tau})] 
\leq cn^{2.5} \log n, \qquad \jnt{\mbox{for all }\tau,}
\]
where $c$ is a constant that does not depend on $n$.
\end{corollary}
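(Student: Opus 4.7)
The statement to prove is Corollary \ref{cor:main}, which is an immediate specialization of Theorem \ref{thm:main} to the case $f_n = n$. So there is essentially nothing to do beyond verifying that the hypotheses of the theorem are satisfied and then substituting.

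The plan is as follows. First, I would observe that with $f_n = n$, we have $\rho = 1 - 1/n$ and the sequence $f_n = n$ trivially satisfies the requirement $f_n \geq n$ in Theorem \ref{thm:main}. Moreover, the arrival processes are the same independent Bernoulli processes with common rate $\rho/n$ assumed in Theorem \ref{thm:main}. Hence the theorem applies verbatim, and yields a policy (the same policy produced in the proof of Theorem \ref{thm:main}) under which
\[
\sum_{i,j=1}^n \E[Q_{i,j}(\tau)] \;\leq\; c\, n^{1.5}\, f_n \,\log f_n, \qquad \mbox{for all } \tau,
\]
for some constant $c$ independent of $n$.

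Substituting $f_n = n$ into the right-hand side gives
\[
c\, n^{1.5}\, n\, \log n \;=\; c\, n^{2.5}\, \log n,
\]
which is exactly the bound claimed in Corollary \ref{cor:main}. Since $c$ does not depend on $n$ in Theorem \ref{thm:main}, the same constant works here. There is no real obstacle; the entire content of the corollary is packed into Theorem \ref{thm:main}, and the corollary is merely highlighted separately to emphasize the most interesting boundary case $\rho = 1 - 1/n$ of the regime $1/n^2 \ll 1 - \rho \leq 1/n$ studied in the paper.
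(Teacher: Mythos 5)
Your proof is correct and coincides with the paper's treatment: the corollary is indeed an immediate specialization of Theorem \ref{thm:main} obtained by substituting $f_n = n$ (which trivially satisfies $f_n \geq n$), giving $c\,n^{1.5}\,n\,\log n = c\,n^{2.5}\log n$. The paper gives no separate proof of the corollary, treating it as self-evident, exactly as you do.
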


{Let us remark here that  we {only} prove Theorem \ref{thm:main} for all sufficiently large $n$. 
The validity of the theorem for smaller $n$ is guaranteed}
{by considering an arbitrary stabilizing policy (e.g., the maximum weight policy) and letting $c$ be large enough so that we have an upper bound to the expected total queue size under that policy.}

\section{Preliminaries}\label{sec:prelim}
Here we state some facts \jnt{that} will be \jnt{used in our} subsequent analysis. 
\paragraph{Concentration Inequalit\jnt{ies}}
We \jnt{will use} the following \jnt{tail bounds for binomial random variables}  ({adapted from} Theorem 2.4 in \cite{FCgraph}).
\begin{theorem}\label{thm:concent}
Let $X_1, X_2, \ldots, X_m$ be independent and identically distributed Bernoulli random variables, with 
\[
\PR(X_i = 1) = p, \quad \mbox{ and } \quad \PR(X_i = 0) = 1-p,
\]
for $i = 1, 2, \ldots, m$. 
Let $X = \sum_{i=1}^m X_i$, \jnt{so that} $\E[X] = mp$. Then, for any $x > 0$, we have
\begin{eqnarray}
(Lower\  tail) \quad \quad \PR(X \leq \E[X] - x) & \leq & \exp\left\{-\frac{x^2}{2\E[X]}\right\}, \label{eq:lower}\\
(Upper\  tail) \quad \quad \PR(X \geq \E[X] + x) & \leq & \exp\left\{-\frac{x^2}{2(\E[X]+x/3)}\right\}. \label{eq:upper}
\end{eqnarray}
\end{theorem}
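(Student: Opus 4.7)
The plan is to prove both tail bounds by the standard Chernoff exponential moment method, then reduce the resulting Chernoff exponent to the stated Bernstein form via two elementary analytic inequalities. Set $\mu = \E[X] = mp$. For every $t \in \R$, independence of the $X_i$'s and the elementary inequality $1+y \leq e^y$ give
\[
\E[e^{tX}] = \bigl(1 + p(e^t-1)\bigr)^m \leq \exp\bigl(mp(e^t-1)\bigr) = \exp\bigl(\mu(e^t-1)\bigr).
\]
This MGF bound is the common starting point for both tails.

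For the upper tail, Markov's inequality applied to $e^{tX}$ with $t > 0$ yields
\[
\PR(X \geq \mu + x) \leq \exp\bigl(-t(\mu+x) + \mu(e^t-1)\bigr).
\]
I would optimize in $t$ by setting $t^* = \ln(1+x/\mu)$, which produces the classical Chernoff form $\exp(-\mu\,\psi_+(x/\mu))$, where $\psi_+(\delta) := (1+\delta)\ln(1+\delta) - \delta$. To obtain the stated bound I would invoke the elementary inequality
\[
\psi_+(\delta) \geq \frac{\delta^2}{2 + \tfrac{2}{3}\delta}, \qquad \delta \geq 0.
\]
Substituting $\delta = x/\mu$ and clearing denominators turns $\mu\,\psi_+(x/\mu)$ into $x^2/\bigl(2(\mu + x/3)\bigr)$, which is exactly the right-hand side of \eqref{eq:upper}.

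For the lower tail, the analogous computation with $t < 0$ (or equivalently applying Markov to $e^{-tX}$ with $t > 0$) gives, after optimizing at $t^* = -\ln(1 - x/\mu)$, the bound $\exp(-\mu\,\psi_-(x/\mu))$ with $\psi_-(\delta) := (1-\delta)\ln(1-\delta) + \delta$ for $\delta \in [0,1)$. Here the required simplification is cleaner: the Taylor expansion
\[
\psi_-(\delta) \;=\; \frac{\delta^2}{2} + \frac{\delta^3}{6} + \frac{\delta^4}{12} + \cdots \;\geq\; \frac{\delta^2}{2},
\]
immediately yields $\exp(-\mu\,\psi_-(x/\mu)) \leq \exp(-x^2/(2\mu))$, which is \eqref{eq:lower}. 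Note that for $x > \mu$ the left tail bound is vacuous since $X \geq 0$, so there is no boundary issue.

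The main technical obstacle is the upper-tail inequality $(1+\delta)\ln(1+\delta) - \delta \geq \delta^2/(2 + 2\delta/3)$. I would verify it by defining $h(\delta) := (1+\delta)\ln(1+\delta) - \delta - \delta^2/(2+2\delta/3)$, noting that $h(0) = 0$, and checking $h'(\delta) \geq 0$ for $\delta \geq 0$; alternatively, a term-by-term comparison of the series expansions of $\psi_+(\delta)$ and $\delta^2/(2+2\delta/3)$ works and avoids calculus. Everything else is bookkeeping: the MGF bound, the Markov step, and the optimization in $t$ are completely routine.
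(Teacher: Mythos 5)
The paper offers no proof of this theorem; it is quoted as an adaptation of Theorem~2.4 in Chung's book, whose proof is exactly the Chernoff exponential-moment argument you describe, so your approach matches the source in all essentials and is correct: the MGF bound, the Markov step, the optimizations at $t^*=\ln(1+x/\mu)$ and $t^*=-\ln(1-x/\mu)$, and the two comparison inequalities $\psi_+(\delta)\geq \delta^2/(2+2\delta/3)$ and $\psi_-(\delta)\geq \delta^2/2$ are all standard and valid. One small caveat: your proposed \emph{alternative} verification of the upper-tail inequality by term-by-term series comparison does not work for all $\delta\geq 0$, since the geometric expansion of $1/(1+\delta/3)$ converges only for $\delta<3$; use the monotonicity argument ($h(0)=0$, $h'\geq 0$) instead.
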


\paragraph{Kingman Bound for \jnt{the discrete-time} $G/G/1$ Queue} 
Consider a discrete-time $G/G/1$ queueing system. 
More precisely, let $\jnt{X}(\tau)$ be the number of packets that arrive during time slot $\tau$, 
 let $\jnt{Y}(\tau)$ be the number of packets that can be served \jnt{during slot} $\tau${, 
and let $Z(\tau)$ be the queue size at the beginning of time slot $\tau$}. Suppose 
that \jnt{the} $X(\tau)$ are i.i.d.\ across time, \jnt{and} so are \jnt{the} $Y(\tau)$. 
Furthermore, \jnt{the processes $X(\cdot)$ and $Y(\cdot)$ are independent.} The queueing dynamics \jnt{are} given by
\begin{equation}\label{eq:qdyn}
\jnt{Z}(\jnt{\tau}+1) = \max\{0, Z(\tau) + X(\tau) - Y(\tau)\}.
\end{equation}
Let $\lambda = \E[X(\jnt{\tau})]$, $m_{2x} = \E[X^2(\tau)]$, $\mu = \E[Y(\tau)]$, and 
$m_{2y} = \E[Y^2(\tau)]$. Suppose that $\lambda < \mu$. The following bound is proved in 
 \cite{SYbook2014} {(Theorem 3.4.2)}, \jnt{using a standard argument based on a  quadratic Lyapunov function.}
\begin{theorem}[Discrete-time Kingman bound]\label{thm:kingman}
\jnt{Suppose that $Z(1)=0$ and that $\lambda<\mu$. Then, }
\begin{equation}\label{eq:kingman-discrete}
\E[Z(\tau)] \leq \frac{m_{2x} + m_{2y} - 2\lambda\mu}{2(\mu-\lambda)},
\qquad\jnt{\mbox{for all }\tau.}
\end{equation}
\end{theorem}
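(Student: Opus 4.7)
The plan is a standard quadratic Lyapunov drift argument followed by a monotonicity upgrade. I would take $V(z) = z^2$. Using the dynamics \eqref{eq:qdyn} and the elementary fact that $\max(0,a)^2 \leq a^2$ for every real $a$, I would write $Z(\tau+1)^2 \leq (Z(\tau) + X(\tau) - Y(\tau))^2$, expand the square, and take expectations. Because $Z(\tau)$ is measurable with respect to the history strictly before slot $\tau$, it is independent of $(X(\tau),Y(\tau))$, so the cross term factors cleanly and produces the drift inequality
\[
\E[Z(\tau+1)^2] - \E[Z(\tau)^2] \;\leq\; -2(\mu-\lambda)\,\E[Z(\tau)] \;+\; (m_{2x}+m_{2y}-2\lambda\mu).
\]

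Next I would telescope this bound from $\tau=1$ to an arbitrary $T$, using $Z(1)=0$ (so $\E[Z(1)^2]=0$) and discarding the non-negative term $\E[Z(T+1)^2]$. Writing $B = (m_{2x}+m_{2y}-2\lambda\mu)/\bigl(2(\mu-\lambda)\bigr)$, this yields the Ces\`aro-style bound $\frac{1}{T}\sum_{\tau=1}^{T}\E[Z(\tau)] \leq B$ for every $T$. So far, everything is a routine application of Foster--Lyapunov drift.

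The delicate step, and the one I expect to be the main obstacle, is upgrading this time-averaged inequality to the pointwise bound $\E[Z(\tau)] \leq B$ claimed in the theorem. My plan is to use stochastic monotonicity: a pathwise Loynes-type coupling (two copies of the system, driven by the same arrival and service streams but started empty at different times) shows that $Z(\tau)$ is stochastically non-decreasing in $\tau$ when $Z(1)=0$, so $\E[Z(\tau)]$ is non-decreasing in $\tau$. The Ces\`aro averages of a non-decreasing non-negative sequence converge to its (possibly infinite) limit, so the uniform bound on the averages forces $\lim_{t\to\infty}\E[Z(t)] \leq B$, and monotonicity then yields $\E[Z(\tau)] \leq B$ for every finite $\tau$. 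Alternatively—avoiding the coupling—one can note that $\lambda<\mu$ together with the same quadratic drift gives positive recurrence via Foster's criterion, pass to the stationary regime in which $\E[Z(\tau+1)^2]=\E[Z(\tau)^2]$ so that the drift inequality immediately reads $0 \leq -2(\mu-\lambda)\E_\pi[Z] + (m_{2x}+m_{2y}-2\lambda\mu)$, and then close the argument with the same monotonicity observation.
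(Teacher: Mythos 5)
Your proposal is correct and follows essentially the same approach as the paper: the paper cites the steady-state quadratic Lyapunov drift bound from \cite{SYbook2014} and then invokes ``a standard coupling argument'' (together with $Z(1)=0$) to transfer it to all finite $\tau$, which is exactly the stochastic monotonicity / Loynes observation you use. You simply fill in both pieces explicitly --- the drift computation that \cite{SYbook2014} carries out, and the monotonicity coupling that the paper leaves unspecified --- so the ideas coincide, and your Ces\`aro-plus-monotonicity variant is just a slight repackaging of the same argument.
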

\jnt{In fact, the above theorem is proved in \cite{SYbook2014} for the expected queue size in steady state. However, since we assume that $Z(1)=0$, a standard coupling argument shows that the same bound holds for $\E[Z(\tau)]$ at any time $\tau$.}

\paragraph{Optimal Clearing Policy} 
\jnt{Similar to \cite{NMC07}, we will use} the concept of the {\it minimum clearance time} of a queue matrix.
Consider a certain queue matrix $[Q_{i,j}]_{i,j=1}^n$, where $Q_{i,j}$ denotes the number 
of packets at input port $i$ destined for output port $j$. 
Suppose that no new packets \jnt{arrive, and that} the goal is to simply clear all packets present in the system, 
in {the least possible} {amount of} time, using only feasible schedules/matchings. We call this \jnt{minimal required time} the 
{\it minimum clearance time} of the \jnt{given} queue matrix, and we denote it by \jnt{$L$.} 
Then, \jnt{$L$} is characterized exactly as follows. 

\begin{theorem}
\label{thm:bvn}
Let $[Q_{i,j}]_{i,j=1}^n$ be a queue matrix. 
Let 
$$R_i = \sum_{j = 1}^n Q_{i,j}\qquad \jnt{\mbox{and}}\qquad  C_j = \sum_{i=1}^n Q_{i, j}$$ be the \jnt{$it$h row sum and the} $j$th column sum, \jnt{respectively.}
\jnt{Then, the minimum clearance time, $L$, is equal to the largest of the row and column sums:}
\begin{equation}\label{eq:min-clear}
{L = \max \left\{\max_i R_i, \max_j C_j\right\}.}
\end{equation}
\end{theorem}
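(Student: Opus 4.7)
The plan is to prove the equality in (\ref{eq:min-clear}) by establishing matching upper and lower bounds on the minimum clearance time $L$.

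The lower bound $L \geq \max_i R_i$ is essentially a capacity argument. In any feasible schedule $\bsigma \in \sS$ we have $\sum_j \sigma_{i,j} \leq 1$, so input port $i$ can transmit at most one packet per time slot. Since all $R_i$ packets initially at input port $i$ must eventually be transmitted, clearing the queue matrix requires at least $R_i$ time slots. The symmetric argument for columns gives $L \geq \max_j C_j$, and together these yield $L \geq \max(\max_i R_i, \max_j C_j)$.

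For the upper bound, let $M = \max(\max_i R_i, \max_j C_j)$. I would exhibit a sequence of $M$ feasible schedules that collectively clear every packet. The natural framework is to view the queue matrix as a bipartite multigraph $G$ with one vertex per input port and one per output port, and $Q_{i,j}$ parallel edges between input $i$ and output $j$. Then the degree of input $i$ equals $R_i$, the degree of output $j$ equals $C_j$, and the maximum degree of $G$ is exactly $M$. By K\"onig's edge-coloring theorem for bipartite multigraphs, the edge set of $G$ can be partitioned into $M$ matchings. Each matching corresponds, via its incidence pattern, to a feasible schedule in $\sS$, and applying these $M$ schedules in successive time slots removes every packet, giving $L \leq M$.

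An alternative route to the upper bound, closer to the ``bvn'' label of the theorem, is to first pad the matrix $Q$ with nonnegative slack entries so that every row sum and column sum equals $M$; this is possible because $\sum_i R_i = \sum_j C_j$ implies that the total row deficit equals the total column deficit, so a greedy fill-in works. Dividing the padded integer matrix by $M$ yields a doubly stochastic matrix, which by the integer version of the Birkhoff--von Neumann decomposition is the sum of $M$ permutation matrices; restricting those permutations to the support of the original $Q$ again gives $M$ feasible schedules that clear all packets. I do not anticipate any real obstacle in the argument: both directions are classical, and the only mildly delicate point is that K\"onig's theorem (or the integer Birkhoff--von Neumann decomposition) produces exactly $M$ matchings rather than more, which is precisely the content of these bipartite results and fails in the general nonbipartite setting.
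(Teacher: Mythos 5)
Your proof is correct and matches the approach the paper relies on. The lower bound is the same counting argument the paper sketches just after the theorem, and your upper bound via K\"onig's bipartite edge-coloring (equivalently, padding $Q$ to constant line sums $M$ and applying the integer Birkhoff--von Neumann decomposition into $M$ permutation matrices) is exactly the ``simple modification of Theorem~5.1.9'' of Hall's \emph{Combinatorial Theory} that the paper cites in lieu of a written-out proof.
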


{The proof of Theorem \ref{thm:bvn} is a simple modification of {the proof} of Theorem 5.1.9 in \cite{Hall1998}.}

\jnt{Note that} in each time slot at most one packet can depart from each input/output port, \jnt{and therefore} 
each $R_i$ \jnt{and} $C_j$ is decreased by at most $1$. Thus, 
the minimum clearance time \jnt{cannot be smaller than the right-hand side of} \eqref{eq:min-clear}. 
Theorem \ref{thm:bvn} states that \jnt{there actually exists an {\it optimal clearing policy}} that clears all packets \jnt{within exactly {$\max \left\{\max_i R_i, \max_j C_j\right\}$}}
time slots. 

\section{Policy Description}\label{sec:policy}
To describe our policy, we \jnt{introduce} three parameters, $b$, $d$, and $s$, 
which specify \jnt{the lengths of certain} time intervals, \jnt{and which, in turn, delineate the different phases of the policy.} 
They are given by\footnote{\jnt{We will treat these parameters as if they were guaranteed to be integer{s}. Rounding them up or down to a nearest integer would overburden our notation but would have no effect on our order-of-magnitude estimates.}}
\begin{eqnarray}
b &=& c_b f_n^2 \log f_n, \label{eq:T} \\
d &=& c_d \sqrt{n} f_n \log f_n, \label{eq:S} \\
s &=& \rho b + \sqrt{c_s b \log f_n}. \label{eq:Y}
\end{eqnarray}
\jnt{Without loss of generality, we will always assume that $n\geq 3$, so that $\log f_n>{1}$. Here $c_b$, $c_d$, and $c_s$ are positive constants (independent of $n$) that will be appropriately chosen. As will be seen in the course of the proof, it suffices to choose them so that 
\begin{equation}\label{eq:c}
c_b>c_s, \qquad c_d^2\geq 640 c_b, \qquad c_d> c_b,\qquad c_s\geq 30,
\end{equation}
and which we henceforth assume. 
We note that the above inequalities do {not} necessarily lead to the best choices for these constants but they are imposed in order to simplify the details of the proof.
}

{For an $n\times n$ input-queued switch, we} 
also define $n$ \jnt{particular} schedules $\bsigma^{(1)}, \bsigma^{(2)}, \ldots, \bsigma^{(n)}$. 
For $m \in \{1,2, \ldots, n\}$, $\bsigma^{(m)}$ is defined by
\[
\sigma^{(m)}_{i,j} = \left\{ \begin{array}{ll}
1, & \mbox{ if } \jnt{j=i+m-1\ \ (\mbox{modulo }n),} \\
0, & \mbox{ otherwise. }
\end{array} \right.
\]
To illustrate, when $n = 3$, the schedules $\bsigma^{(1)}, \bsigma^{(2)}$, and $\bsigma^{(3)}$ are given by
\[
\bsigma^{(1)} = \left( \begin{array}{ccc}
1 & 0 & 0 \\
0 & 1 & 0 \\
0 & 0 & 1
\end{array}
\right), \quad
\bsigma^{(2)} = \left( \begin{array}{ccc}
0 & 1 & 0 \\
0 & 0 & 1 \\
1 & 0 & 0
\end{array}
\right), \quad
\mbox{ and } \quad \bsigma^{(3)} = \left( \begin{array}{ccc}
0 & 0 & 1 \\
1 & 0 & 0 \\
0 & 1 & 0
\end{array}
\right).
\]
Note that 
\[
\bsigma^{(1)} + \bsigma^{(2)} + \cdots + \bsigma^{(n)}  = 
\left(\begin{array}{ccc}
1 & \cdots & 1 \\
\vdots & \ddots & \vdots \\
1 & \cdots & 1
\end{array}
\right),
\]
the $n \times n$ matrix of all $1${s}.


We now proceed with the description of the policy. Time is divided into consecutive intervals, which we call {\it arrival periods}, of length $b$. For $k=0,1,2,\ldots$, the $k$th arrival period consists of slots $kb+1,kb+2,\ldots,(k+1)b$. Arrivals that occur during the $k$th arrival period are said to belong to the $k$th {\it batch.}

The general idea behind the policy is as follows. The policy aims to serve all of the packets in the $k$th batch during the $k$th {\it service period}, of length $b$, which is offset from the arrival period by a delay of $d$. Thus, the $k$th service period consists of time slots $kb+d+1,\ldots,(k+1)b+d$. If the policy does not succeed in serving all of the packets in the $k$th batch, the unserved packets will be considered {\it backlogged} and will be handled together with newly arriving packets from subsequent batches, in subsequent service periods. As it will turn out, however, the number of backlogged packets will be zero, with high probability.

We now continue with a precise description, by considering what happens during the $k$th service period. Note that the time slots $kb+1,\ldots,kb+d$ do not belong to the $k$th service period. Packets from the $k$th batch will accumulate during these time slots, but none of them will be served. At the beginning of the $k$th service period (the beginning of time slot $bk+d+1$), we may have some backlogged packets from previous service periods, and we denote their number by $B_k$. We assume that $B_0=0$.

The $k$th service period consists of three phases, which are described below and are illustrated in Fig.~\ref{f:f}.
\begin{figure}[ht]
\includegraphics[width=4.5in]{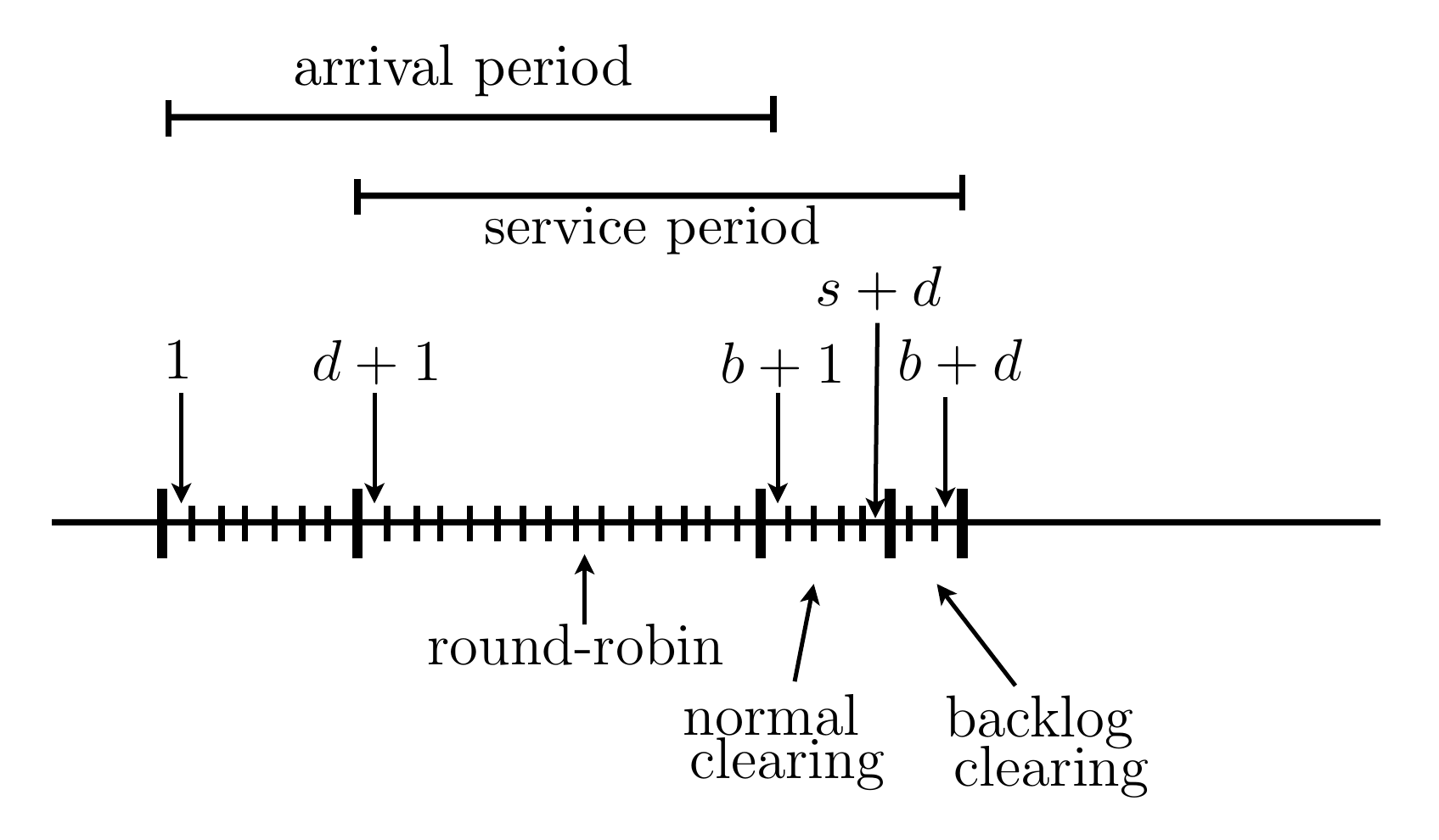}
\caption{Illustration of a typical arrival {period} 
and the phases of a service {period}. 
Slots are numbered consecutively, starting with the first slot of the arrival {period}.}
\label{f:f}
\end{figure}

\begin{enumerate}
\item
The first $b-d$ slots of the $k$th service period, namely, slots $kb+d+1,\ldots,(k+1)b$,  comprise a {\it round-robin phase}: we 
cycle through 
the schedules 
$\bsigma^{(1)}$, $\bsigma^{(2)}$, \ldots, $\bsigma^{(n)}$ in a round-robin manner.
However, during this phase, we do not serve any of the backlogged packets; we only serve packets that belong to the $k$th batch.\footnote{This particular choice introduces some inefficiency, because offered service will be wasted  whenever a queue has backlogged packets but no packets that belong to the $k$th batch. However, this choice simplifies our analysis and makes little actual difference, because the number of backlogged packets is zero with high probability.}

\item
The next $\ell=d+s-b$ slots, namely slots $(k+1)b+1,\ldots,kb+d+s$, comprise the $k$th {\it normal clearing phase}. 
Similar to the round-robin phase, we do not serve any backlogged packets during this phase. Furthermore, even though packets from the $(k+1)$st batch may have started to arrive, we do not serve any of them.
By the beginning of this phase, all of the arrivals from the $k$th batch have already arrived. Some of them have already been served during the round-robin phase. To those that remain, we apply the optimal {clearing} policy described earlier{; cf. Theorem \ref{thm:bvn}}. However, there is a possibility that the phase terminates before we succeed in serving all of the remaining packets from the $k$th batch. Let $U_k$ be the number of the packets from the $k$th batch that were left unserved during this phase. These $U_k$ packets are considered backlogged and are added to the backlog $B_k$ from earlier periods.

\item
The last $r=b-s$ slots, namely slots $kb+d+s+1,\ldots, {(k+1)b+d}$, comprise the $k$th {\it backlog clearing phase}. During this phase, we serve backlogged packets using some arbitrary policy. The only requirement is that the policy {serve} at least one packet  {at each slot that} a backlogged packet is available. However, we do not serve any of the newly arrived packets from the $(k+1)$st batch.
Any backlogged packets that are not served during this phase remain backlogged and comprise the number $B_{k+1}$ of backlogged packets at the beginning of the next service period. Since at least one backlogged packet is served (whenever available) during each one of these $r$ slots, and since there are no additions to the backlog during this phase, we have
\begin{equation}\label{eq:bevol}
B_{k+1}\leq \max\{0,B_k+U_k-r\},\qquad k=0,1,\ldots
\end{equation}
\end{enumerate}

The total length of the three phases is
$$(b-d)+ (d+s-b)+(b-s)=b,$$
so that the length of a service period is equal to the length of an arrival period. However,   
before continuing, we need to make sure that the duration of each phase is a positive number, so that the policy is well-defined. This is accomplished in the next two lemmas, which also provide order of magnitude information on the durations of these phases.

\begin{lemma}\label{l:first}
The length $r=b-s$ of the backlog clearing phase satisfies
$$r=c_r f_n \log f_n,$$
where
$c_r=c_b-\sqrt{c_s c_b}>0$.  In particular, when $n$ is large enough, we have $r\geq 1$.
\end{lemma}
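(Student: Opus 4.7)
The plan is to substitute the definitions of $b$ and $s$ into $r = b - s$ and simplify using $1-\rho = 1/f_n$. Writing $s = \rho b + \sqrt{c_s b \log f_n}$, we get
$$r = b - s = (1-\rho)b - \sqrt{c_s b \log f_n} = \frac{b}{f_n} - \sqrt{c_s b \log f_n}.$$
Now I would plug in $b = c_b f_n^2 \log f_n$. The first term becomes $c_b f_n \log f_n$, and the square-root term becomes $\sqrt{c_s \cdot c_b f_n^2 \log f_n \cdot \log f_n} = \sqrt{c_s c_b}\, f_n \log f_n$ (using that $f_n$ and $\log f_n$ are positive; recall that we assume $n\geq 3$ so $\log f_n > 0$). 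Factoring $f_n \log f_n$ out then yields
$$r = (c_b - \sqrt{c_s c_b})\, f_n \log f_n = c_r f_n \log f_n,$$
as claimed.

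For the positivity of $c_r$, I would observe that $c_r = \sqrt{c_b}(\sqrt{c_b} - \sqrt{c_s}) > 0$ exactly because $c_b > c_s$, which is one of the standing assumptions in \eqref{eq:c}. Finally, for the claim $r \geq 1$ for all sufficiently large $n$, since $c_r$ is a fixed positive constant and $f_n \log f_n \geq n \log n \to \infty$ as $n \to \infty$, the quantity $c_r f_n \log f_n$ eventually exceeds $1$.

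There is no real obstacle here; the lemma is essentially a routine algebraic verification that the chosen constants and scalings are consistent, carried out to confirm that the backlog clearing phase has positive (and in fact diverging) length. The only small subtlety worth stating explicitly is the dependence of $c_r > 0$ on the choice $c_b > c_s$, which is why this inequality was imposed among the conditions in \eqref{eq:c}.
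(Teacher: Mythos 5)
Your proof is correct and follows essentially the same route as the paper: substitute the definitions of $b$ and $s$, use $1-\rho = 1/f_n$ to simplify $(1-\rho)b$, factor out $f_n \log f_n$, and invoke $c_b > c_s$ from \eqref{eq:c} for positivity. The explicit factorization $c_r = \sqrt{c_b}(\sqrt{c_b}-\sqrt{c_s})$ and the brief argument that $f_n \log f_n \to \infty$ are minor additions but do not change the approach.
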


\begin{proof}
Using the assumption $\rho=1-1/f_n$, we have $(1-\rho)b=b/f_n=c_b f_n\log f_n$. We then obtain
\begin{eqnarray*}
b-s&=& b-\rho b -\sqrt{c_s b \log f_n}\\
&=&c_b f_n \log f_n -\sqrt{c_s c_b f_n^2\log^2 f_n}\\
&=&(c_b-\sqrt{c_s c_b})f_n \log f_n\\
&=&c_r f_n \log f_n.
\end{eqnarray*}
The fact that $c_r>0$ follows from our assumption in Eq.\ \eqref{eq:c}.
\end{proof}

\begin{lemma} 
The length $\ell=d+s-b$ of the normal clearing phase satisfies
$$\ell \geq c_{\ell}\sqrt{n} f_n \log f_n,$$
where
$c_{\ell}=c_d-c_r>0$. In particular, when $n$ is large enough, we have $\ell\geq 1$.

\end{lemma}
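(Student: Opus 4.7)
The plan is to argue by straightforward algebraic substitution, leveraging Lemma~\ref{l:first}. I would begin by writing $\ell = d + s - b = d - (b-s)$ and invoking Lemma~\ref{l:first}, which gives $b - s = c_r f_n \log f_n$ with $c_r = c_b - \sqrt{c_s c_b}$. Combined with the definition $d = c_d \sqrt{n}\, f_n \log f_n$, this yields
$$\ell = c_d \sqrt{n}\, f_n \log f_n - c_r f_n \log f_n = \bigl(c_d \sqrt{n} - c_r\bigr)\, f_n \log f_n.$$

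Next I would verify the claimed lower bound with $c_\ell = c_d - c_r$. A direct subtraction gives
$$\ell - c_\ell \sqrt{n}\, f_n \log f_n = \bigl(c_d \sqrt{n} - c_r\bigr) f_n \log f_n - (c_d - c_r)\sqrt{n}\, f_n \log f_n = c_r\bigl(\sqrt{n}-1\bigr) f_n \log f_n,$$
which is nonnegative whenever $n \geq 1$, since Lemma~\ref{l:first} ensures $c_r > 0$. This establishes $\ell \geq c_\ell \sqrt{n}\, f_n \log f_n$, as desired.

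To check that the constant $c_\ell = c_d - c_r$ is strictly positive, I would note that by assumption~\eqref{eq:c} we have $c_d > c_b$, while $c_r = c_b - \sqrt{c_s c_b} < c_b$, so $c_\ell > 0$. Finally, since $c_\ell > 0$ and $\sqrt{n}\, f_n \log f_n \to \infty$ as $n\to\infty$, the bound $\ell \geq c_\ell \sqrt{n}\, f_n \log f_n \geq 1$ holds for all sufficiently large $n$, so the normal clearing phase has positive length and the policy is well-defined. There is no real obstacle in this lemma: the argument is a short, direct computation that parallels the proof of Lemma~\ref{l:first} and merely repackages the definitions of $b$, $d$, and $s$.
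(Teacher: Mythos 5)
Your proof is correct and follows essentially the same route as the paper: write $\ell = d - r$, substitute the explicit forms of $d$ and $r$ from Lemma~\ref{l:first}, and conclude using $c_r < c_b < c_d$ (from Eq.~\eqref{eq:c}) that $c_\ell = c_d - c_r > 0$. The only cosmetic difference is that you verify the inequality $\ell \geq c_\ell \sqrt{n} f_n \log f_n$ by an explicit subtraction rather than by directly bounding $c_r f_n \log f_n \leq c_r \sqrt{n} f_n \log f_n$, which is equivalent.
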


\begin{proof}
Recall that $r=b-s$. It follows that
\begin{eqnarray*}
\ell&=&d+s-b\\
&=&d-r\\
&=&c_d \sqrt{n} f_n \log f_n -c_r f_n \log f_n\\
&\geq& (c_d-c_r) \sqrt{n} f_n\log f_n\\
&=&c_{\ell} \sqrt{n} f_n\log f_n.
\end{eqnarray*}
Note that $c_r<c_b<c_d$ (cf.\ Lemma \ref{l:first}  and Eq.\ \eqref{eq:c}), which implies that $c_{\ell}>0$.
\end{proof}

\section{Policy Analysis}\label{sec:analysis}
The performance analysis of the  proposed policy involves the following line of argument for what happens during the $k$th arrival and service period.
\begin{enumerate}
\item[(a)] 
In the first $d$ slots of the $k$th arrival period, we have an expected number $O(nd)$ of arrivals.
\item[(b)] With high probability, {at every time slot} during the round-robin phase, there is a positive number of packets from the $k$th arrival batch at each queue; cf.\ Lemmas \ref{l:aa} and \ref{l:w}. Therefore, offered service is never wasted. In particular, at least as many packets are served as they arrive (in the expected value sense), and the total queue size does not grow.
\item[(c)] With high probability, all of the packets from the $k$th batch that are in queue at the beginning of the normal clearing phase get cleared and therefore the number $U_k$ of newly backlogged packets is zero; cf.\ Lemma \ref{wh}.
\item[(d)] The number $B_k$ of backlogged packets evolves similar to a discrete-time G/D/1 queue; cf.\ Eq.~\eqref{eq:bevol}. Because $U_k$ is zero with high probability, the Kingman bound (Theorem \ref{thm:kingman}) implies that the expected number of backlogged packets, at any time, is small; cf.\ Lemma~\ref{l:back}.
\end{enumerate}
The above steps, when translated into precise bounds on queue sizes, will lead to an $O(nd)$ bound on the expected total queue size at any time.

\subsection{No waste during the round-robin phase}
In this subsection, we establish that during the round-robin phase, every queue contains a nonzero number of packets from the current arrival batch, with high probability. We first introduce some convenient notation. We will use the variable $t\in\{1,\ldots,b+1\}$ to index the $b$ slots of the $k$th arrival period together with the first slot of the subsequent normal clearing phase. For $t\in\{1,\ldots,b\}$, we
let $A_{i,j}^k(t)$ be the number of arrivals to the $(i,j)$th queue during the first $t$ time slots of the $k$th arrival period; these are the time slots $kb+1,kb+{2},\ldots,kb+t$. 
Similarly, for $t\in\{1,\ldots,b\}$, we let $S_{i,j}^k(t)$ be the number of packets that arrive to queue $(i,j)$ during the $k$th arrival period and get served during the first $t$ time slots of the $k$th arrival period. Finally, for $t\in\{1,\ldots,b+1\}$, we
let $Q_{i,j}^k(t)$ be the number of packets from the $k$th arrival batch that are in queue $(i,j)$ at the beginning of the $t$th slot of the $k$th arrival period. 
With these definitions, we have, 
\begin{equation}\label{eq:qk}
Q^k_{i,j}(t+1)=A^k_{i,j}(t)-S^k_{i,j}(t), \qquad t=1,\ldots,b.
\end{equation}

We are interested in conditions under which no offered service is wasted during the round-robin phase. 
Equivalently, we are interested in conditions under which all queues have a positive number of packets from the $k$th batch.
Note that the round-robin phase involves slots for which $t\in\{d+1,\ldots,b\}$. We have the following observation on the queue sizes at the beginning of these slots.

\begin{lemma}\label{l:aa}
Suppose that {$t\in\{d,\ldots,b-1\}$} 
and that 
$$A_{i,j}^k(t)> \frac{t-d}{n} +1.$$
Then, $Q^k_{i,j}(t+1)>0$.
\end{lemma}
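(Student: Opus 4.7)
The plan is to bound the actual service $S^k_{i,j}(t)$ from above and compare it with $A^k_{i,j}(t)$, using the identity $Q^k_{i,j}(t+1)=A^k_{i,j}(t)-S^k_{i,j}(t)$ from Eq.~\eqref{eq:qk}. Thus, it suffices to show that, under the stated hypothesis, $S^k_{i,j}(t)<A^k_{i,j}(t)$.

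First, I would observe that during the first $d$ local slots of the $k$th arrival period (slots $kb+1,\ldots,kb+d$), no packets from the $k$th batch are served, since this interval precedes the $k$th service period. Consequently $S^k_{i,j}(t)$ counts only $k$-batch packets served during local slots $d+1,\ldots,t$; in particular $S^k_{i,j}(d)=0$, which already handles the boundary case $t=d$ (where the hypothesis forces $A^k_{i,j}(d)\geq 2>0$). For $t\in\{d+1,\ldots,b-1\}$, these local slots lie entirely within the round-robin phase, during which the schedules $\bsigma^{(1)},\ldots,\bsigma^{(n)}$ are applied cyclically. By the definition of $\bsigma^{(m)}$, for each pair $(i,j)$ there is exactly one $m\in\{1,\ldots,n\}$ with $\sigma^{(m)}_{i,j}=1$, so queue $(i,j)$ is offered service exactly once per full round-robin cycle of length $n$. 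Hence, over the $t-d$ consecutive round-robin slots, the total offered service to $(i,j)$ is at most $\lceil (t-d)/n\rceil$, and since actual service is bounded above by offered service,
$$S^k_{i,j}(t)\leq \lceil (t-d)/n\rceil.$$

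Finally, because $\lceil x\rceil<x+1$ for every real $x$, we have $\lceil (t-d)/n\rceil<(t-d)/n+1$, so combining with the hypothesis $A^k_{i,j}(t)>(t-d)/n+1$ yields $S^k_{i,j}(t)<A^k_{i,j}(t)$, and hence $Q^k_{i,j}(t+1)>0$ by \eqref{eq:qk}. There is no serious obstacle; the only mild subtlety is correctly accounting for the ceiling (and for the trivial endpoint $t=d$). The lemma is essentially a deterministic bookkeeping fact about the round-robin schedule, and this is exactly what sets up the probabilistic analysis in Lemma~\ref{l:w} that will bound the chance of any offered service being wasted.
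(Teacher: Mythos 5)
Your proof is correct and follows essentially the same argument as the paper: you use the identity $Q^k_{i,j}(t+1)=A^k_{i,j}(t)-S^k_{i,j}(t)$, observe that no service is offered to $k$-batch packets in the first $d$ slots, bound $S^k_{i,j}(t)\leq\lceil (t-d)/n\rceil<(t-d)/n+1$ by counting round-robin offers, and compare with the hypothesis. The only difference is that you spell out the trivial $t=d$ boundary case and the ceiling inequality explicitly, which the paper leaves implicit.
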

\begin{proof} 
Note that that for the first $d$ time slots, packets from the $k$th batch do not receive any service.
Starting from the $(d+1)$st slot, we are in the round-robin phase, and queue $(i,j)$ is offered service once every $n$ slots. Therefore,
$$S^k_{i,j}(t)\leq \Big\lceil \frac{t-d}{n} \Big\rceil < \frac{t-d}{n} +1<A_{i,j}^k(t).$$
The result follows from Eq.\ \eqref{eq:qk}.
\end{proof}

The previous lemma highlights the importance of the events $A_{i,j}^k(t)> {(t-d)/n+1}$. We will show that the complements of these events have, collectively, small probability. To this effect, let $W_{i,j}^k(t)$ be the event defined by
$$W^k_{i,j}(t) =\Big\{ A_{i,j}^k(t)\leq \frac{t-d}{n} +1\Big\},\qquad
{t=d,\ldots,b-1.}$$
Let also {$W^k$} be the union of these events, over all queues, and over all indices $t$ that are relevant to the round-robin phase:
$$W^k=\bigcup_{i=1}^n \bigcup_{j=1}^n \bigcup_{{t=d}}^{{b-1}} W^k_{i,j}(t).$$

\begin{lemma}\label{l:w}
For $n$ sufficiently large,  we have
$$\PR(W^k)\leq \frac{1}{2f_n^{13}},\qquad\mbox{for all }k.$$
\end{lemma}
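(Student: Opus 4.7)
The plan is a standard Chernoff-plus-union-bound argument. Fix $i,j$ and $t\in\{d,\ldots,b-1\}$. Then $A^k_{i,j}(t)$ is a sum of $t$ i.i.d.\ Bernoulli$(\rho/n)$ random variables, with mean $\rho t/n$. I would invoke the lower-tail bound of Theorem~\ref{thm:concent} with
$$x \;=\; \E[A^k_{i,j}(t)]-\Bigl(\frac{t-d}{n}+1\Bigr) \;=\; \frac{d}{n}-\frac{t}{nf_n}-1.$$
Since $t\leq b$, we have $t/(nf_n)\leq c_b f_n\log f_n/n$, while $d/n=c_d f_n\log f_n/\sqrt n$. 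Pulling out a factor of $f_n\log f_n/\sqrt n$ and using $c_d>c_b$ from \eqref{eq:c}, one checks that $x\geq (c_d/4)\,f_n\log f_n/\sqrt n$ for all sufficiently large $n$ (the additive $-1$ is absorbed because $d/n\to\infty$, thanks to $f_n\geq n$).

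Applying the lower-tail bound of Theorem~\ref{thm:concent} and using the crude bound $\E[A^k_{i,j}(t)]\leq b/n=c_b f_n^2\log f_n/n$ in the denominator,
$$\PR(W^k_{i,j}(t)) \;\leq\; \exp\!\Bigl(-\frac{x^2}{2\,\E[A^k_{i,j}(t)]}\Bigr) \;\leq\; \exp\!\Bigl(-\frac{c_d^2}{32\,c_b}\log f_n\Bigr) \;\leq\; f_n^{-20},$$
where the last inequality invokes the standing assumption $c_d^2\geq 640\, c_b$ from \eqref{eq:c}, which makes the exponent at least $20$. A union bound over the $n^2(b-d)\leq c_b f_n^4\log f_n$ choices of $(i,j,t)$ then yields $\PR(W^k)\leq c_b f_n^{-16}\log f_n$, which is at most $1/(2f_n^{13})$ for all sufficiently large $n$ (using $f_n\geq n$).

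The main thing to watch is the very first step: verifying that the gap $x$ is not only positive but grows like $d/n$, in the face of the $-1$ and the $-t/(nf_n)$ correction. The conditions $c_d>c_b$ and ``$n$ sufficiently large'' in the statement are precisely what makes this go through. Beyond that, the argument is routine: the lower bound $c_d^2/(32\,c_b)\geq 20$ supplied by \eqref{eq:c} leaves ample margin to absorb the $n^2(b-d)\sim f_n^4\log f_n$ factor from the union bound and still deliver the requested $f_n^{-13}$ tail.
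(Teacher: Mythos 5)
Your proposal is correct and follows essentially the same approach as the paper's proof: fix $(i,j,t)$, express $W^k_{i,j}(t)$ as a lower-tail deviation of size $x$ from the mean, show $x\geq (c_d/4)f_n\log f_n/\sqrt n$ for $n$ large, apply the lower-tail Chernoff bound with the crude denominator estimate $\E[A^k_{i,j}(t)]\leq b/n$ to get an exponent $(c_d^2/32c_b)\log f_n\geq 20\log f_n$, and finish with a union bound over the $n^2(b-d)$ events. The only minor inaccuracy is your remark that $c_d>c_b$ is what makes the gap estimate go through: as your own computation shows, the $-c_b f_n\log f_n/n$ and $-1$ terms are dominated by the $c_d f_n\log f_n/\sqrt n$ term for $n$ large regardless of the relation between $c_d$ and $c_b$, because of the extra $\sqrt n$ factor (and $f_n\geq n$); the paper's argument likewise only invokes ``$n$ sufficiently large'' at this step, reserving $c_d>c_b$ for ensuring $\ell>0$ elsewhere.
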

\begin{proof}
Let us fix some $(i,j)$ and some $t\in\{d,\ldots,b{-1}\}$. 
Note that $\E\big[A_{i,j}^k(t)\big]=\rho t/n$. Therefore, the event $W_{i,j}^k(t)$ is the same as the event
$$\Big\{A_{i,j}^k(t)\leq \E\big[A_{i,j}^k(t)\big] -\frac{\rho t}{n}+\frac{t-d}{n} +1\Big\},$$
which is of the form
$$\Big\{A_{i,j}^k(t)\leq \E\big[A_{i,j}^k(t)\big] -x\Big\},$$
where
\begin{eqnarray*}
x&=&\frac{\rho t}{n}-\frac{t-d}{n} -1\\
&=&\frac{\rho (t-d)}{n}-\frac{t-d}{n}+\frac{\rho d}{n} -1\\
&=&-(1-\rho)\frac{t-d}{n} +\frac{\rho d}{n} -1.
\end{eqnarray*}
Using the facts $t-d\leq b$ and $1-\rho=1/f_n$, the first term on the right-hand side is bounded above (in absolute value) by $b/(nf_n)$. For the second term, we use the facts  $\rho=1-(1/f_n)$, $f_n\geq n\geq 2$, to obtain $\rho\geq 1/2$.  Therefore, 
\begin{eqnarray*}
x&\geq& -\frac{b}{nf_n}+\frac{d}{2n}-1\\
&=&\frac{1}{n}\Big((c_d/2)\sqrt{n}f_n\log f_n  -c_b f_n \log f_n -n\Big)\\
&\geq&\frac{1}{n}\Big((c_d/2)\sqrt{n}f_n\log f_n  -(c_b+1) f_n \log f_n \Big).
\end{eqnarray*}
Now, for $n$ large enough, we have
$c_b+1\leq (c_d/4)\sqrt{n}$, and this implies that
\begin{equation}\label{eq:x}
x\geq \frac{1}{n}\cdot \frac{c_d}{4}\cdot \sqrt{n} f_n\log f_n =\frac{c_d f_n \log f_n}{4 \sqrt{n}}.
\end{equation}

Using Eq.\ \eqref{eq:lower} (the lower tail bound in  Theorem \ref{thm:concent}), we have 
$$\PR\big(W_{i,j}^k(t)\big) =\PR\Big( A_{i,j}^k(t)\leq \E\big[A_{i,j}^k(t)\big] -x\Big)
\leq \exp\Big\{-\frac{x^2}{2\E[A_{i,j}^k(t)]}\Big\}.$$
We note that $\E[A_{i,j}^k(t)]=\rho t/n\leq b/n=c_b f_n^2 (\log f_n)/n$. Using also Eq.\ \eqref{eq:x}, we obtain 
$$\frac{x^2}{2\E[A_{i,j}^k(t)]}\geq \frac{c_d^2 f_n^2 \log^2 f_n}{16n}\cdot
\frac{1}{2c_b f_n^2 (\log f_n)/n}=\frac{c_d^2}{32 c_b}\log f_n \geq 20 \log f_n,$$
where the last inequality follows from our assumption that $c_d^2\geq 640 c_b$; cf.~Eq.~\eqref{eq:c}. Consequently,
$$\PR\big(W_{i,j}^k(t)\big)\leq \exp\{-20\log f_n\}=\frac{1}{f_n^{20}}\leq\frac{1}{2f_n^{19}}.$$

The event $W^k$ is the union of {$n^2(b-d)$} events $W^k_{i,j}(t)$. We note that
\begin{equation}\label{eq:nbb}
{n^2(b-d)}\leq n^2b \leq f_n^2 c_b f_n^2 \log f_n
\leq f_n^6,
\end{equation}
as long as $n$ is large enough so that $c_b\leq f_n$.
Therefore, using the union bound
$$\PR(W^k)\leq {n^2(b-d)} \frac{1}{2f_n^{19}} \leq \frac{f_n^6}{2f_n^{19}}=
\frac{1}{2f_n^{13}}.$$
\end{proof}

\subsection{The probability of no new backlog}
In this subsection we show that $U_k$, the additional backlog generated during the $k$th service period, is zero with high probability. Our analysis builds on an upper bound on the probability that the number of packets in the $k$th batch that are associated with a particular port is appreciably larger than its expected value. Towards this purpose, we define the row and column sums for the arrivals in the $k$th batch:
$$R_i^k=\sum_j A^k_{i,j}(b),\qquad C_j^k =\sum_i A^k_{i,j}(b).$$
We also define  the events
$$F_i^k=\{R_i^k>s\},\qquad G^k_j=\{C_j^k>s\},$$
and  
$$H^k=\big( F_1^k\cup\cdots \cup F_n^k\big) \cup \big( G_1^k\cup\cdots \cup G_n^k\big).$$

In what follows, we first show that the event $H^k$ has low probability. We then show that if neither of the events $W^k$ or $H^k$ occurs (which has high probability), then $U_k$ is equal to zero.

\begin{lemma}\label{l:h}
For $n$ sufficiently large,  we have
$$\PR(H^k)\leq \frac{1}{2f_n^{13}},\qquad\mbox{for all }k.$$
\end{lemma}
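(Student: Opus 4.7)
The plan is to bound $\PR(F_i^k)$ and $\PR(G_j^k)$ individually using the upper-tail Chernoff bound from Theorem \ref{thm:concent}, and then combine them with a union bound over the $2n$ row/column events.

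First, I would observe that, by the independence of arrivals across distinct input-output pairs and across time, the row sum $R_i^k = \sum_{j=1}^n A_{i,j}^k(b)$ is a sum of $nb$ i.i.d.\ Bernoulli$(\rho/n)$ random variables, hence $R_i^k \sim \mathrm{Binomial}(nb,\rho/n)$ with $\E[R_i^k] = \rho b$. The same holds for each column sum $C_j^k$. Writing $x = s - \rho b = \sqrt{c_s b \log f_n}$, the event $F_i^k$ becomes $\{R_i^k \geq \E[R_i^k] + x\}$, which is exactly the form required by the upper tail bound \eqref{eq:upper}.

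Next I would apply \eqref{eq:upper} and simplify the denominator. The key quantitative step is to show that $x/3 \ll \rho b$, so that the denominator $2(\E[R_i^k] + x/3)$ is essentially $2\rho b$. Using the definitions of $b$, $s$, and $\rho$, one gets
\[
\frac{x}{3\rho b} \;=\; \frac{1}{3\rho}\sqrt{\frac{c_s}{c_b f_n^2}} \;\leq\; \frac{2}{3 f_n}\sqrt{\frac{c_s}{c_b}},
\]
which, thanks to the assumption $c_b > c_s$ in \eqref{eq:c} and $\rho \geq 1/2$, tends to zero with $n$. In particular, for large $n$ one has $\rho\bigl(1 + x/(3\rho b)\bigr) \leq 1 - 1/(3 f_n) < 1$. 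Plugging this into \eqref{eq:upper} together with $x^2 = c_s b \log f_n$ gives
\[
\PR(F_i^k) \;\leq\; \exp\!\left(-\frac{c_s \log f_n}{2\rho\bigl(1 + x/(3\rho b)\bigr)}\right) \;\leq\; \exp\!\left(-\frac{c_s}{2}\log f_n\right) \;\leq\; f_n^{-15},
\]
where the last inequality uses the standing assumption $c_s \geq 30$. By symmetry the same bound holds for each $\PR(G_j^k)$.

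Finally, the union bound over $n$ row events and $n$ column events, combined with $n \leq f_n$, yields $\PR(H^k) \leq 2n f_n^{-15} \leq 2 f_n^{-14} \leq 1/(2 f_n^{13})$ whenever $f_n \geq 4$, which holds for sufficiently large $n$. I expect the main obstacle to be purely quantitative: the target bound $1/(2 f_n^{13})$ is tight enough that one cannot afford the cruder estimate $\E[R_i^k] + x/3 \leq 2b$, which would only yield exponent $c_s/4$ and fall a factor of $f_n$ short after the union bound. The remedy is to exploit $c_b > c_s$ to force the correction term $x/(3\rho b)$ to be $o(1)$, recovering the exponent $c_s/2$; the other conditions in \eqref{eq:c} play no role here.
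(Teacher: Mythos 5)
Your proof is correct and takes essentially the same approach as the paper: apply the upper-tail bound of Theorem~\ref{thm:concent} to $R_i^k\sim\mathrm{Binomial}(nb,\rho/n)$ with $x=\sqrt{c_s b\log f_n}$, show that the denominator $2(\rho b + x/3)$ is at most $2b$ so that the exponent is at least $(c_s/2)\log f_n \geq 15\log f_n$, then union-bound over the $2n\leq 2f_n$ row/column events. The one spot where you do more work than necessary is in controlling the denominator: rather than expanding $x/(3\rho b)$ and showing it is $o(1)$, the paper simply observes
$$\rho b+\tfrac{x}{3}\leq \rho b+x = s\leq b,$$
where $s\leq b$ is already guaranteed by $c_b>c_s$ (this is precisely what Lemma~\ref{l:first} records as $r=b-s>0$). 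Your closing remark that the crude bound $\rho b + x/3\leq 2b$ would lose a factor of two in the exponent is accurate, but the remedy is simpler than the one you describe: the natural one-line estimate $\rho b+x/3\leq s\leq b$ already gives the sharp exponent $c_s/2$, and it relies on the same hypothesis $c_b>c_s$ that your computation invokes.
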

\begin{proof}
Let us focus on the event $F_1^k=\{R_1^k>s\}$; the argument for other events $F_i^k$ or $G_j^k$ is identical. 
Note that $\E[R_1^k]=\rho b$.
We have, using Eq.\ \eqref{eq:upper} (the upper tail bound in  Theorem \ref{thm:concent}) in the last step,
\begin{eqnarray*}
\PR(R_1^k>s)&=&\PR\big(R_1^k >\rho b +\sqrt{c_s b \log f_n}\big)\\
&=&\PR\big(R_1^k >\E[R_1^k] +\sqrt{c_s b \log f_n}\big)\\
&\leq&\exp\Big\{-\frac{c_s b\log f_n}{2(\rho b+x/3)}\Big\},
\end{eqnarray*}
where $x=\sqrt{c_s b \log f_n}$. Notice that
$$\rho b+\frac{x}{3}\leq \rho b +x = \rho b+\sqrt{c_s b \log f_n}=s\leq b.$$
Therefore, {when $n \geq 4$,}
$$\PR(R_1^k>s)\leq \exp\Big\{-\frac{c_s b\log f_n}{2b}\Big\}=\frac{1}{f_n^{c_s/2}}
\leq \frac{1}{4f_n^{14}},$$
where the last inequality follows from our assumption that $c_s\geq 30$; cf.~Eq.~\eqref{eq:c}.
The event $H^k$ is the union of $2n$ events, each with probability bounded above by $1/(4f_n^{14})$. Using the union bound and the assumption $n\leq f_n$, we obtain $\PR(H^k)\leq 1/(2f_n^{13})$.
\end{proof}

\begin{lemma}\label{wh}
\mbox{ }
\begin{enumerate}
\item[(a)] Consider a sample path under which neither $W^k$ nor $H^k$ occurs. Then, $U_k=0$.
\item[(b)] We have $\PR(U_k>0)\leq 1/f_n^{13}$.
\item[(c)] {For every sample path,} we have $U_k\leq n^2b$.
\end{enumerate}
\end{lemma}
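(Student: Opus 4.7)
The plan is to handle the three claims sequentially, with part (a) doing all the substantive work while parts (b) and (c) follow by a union bound and a trivial counting argument, respectively.

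For part (a), the key is to combine Lemma \ref{l:aa} (which guarantees nonempty batch queues throughout the round-robin phase) with Theorem \ref{thm:bvn} (the Birkhoff--von Neumann-type clearance time characterization). On a sample path where $W^k$ does not occur, we have $A^k_{i,j}(t) > (t-d)/n + 1$ for every queue $(i,j)$ and every $t \in \{d,\ldots,b-1\}$, so Lemma \ref{l:aa} yields $Q^k_{i,j}(t+1) > 0$ throughout the round-robin phase. Hence no offered service is wasted: each queue is served exactly the $(b-d)/n$ times it is scheduled, so at the beginning of the normal clearing phase the residual batch queue satisfies $Q^k_{i,j}(b+1) = A^k_{i,j}(b) - (b-d)/n$.

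I would then sum this residual along rows and columns. Using the complement of $H^k$, every $R_i^k$ and every $C_j^k$ is bounded by $s$, so every row and column sum of the residual matrix is at most $s - (b-d) = d + s - b = \ell$. By Theorem \ref{thm:bvn}, the minimum clearance time of the residual matrix is at most $\ell$, which is exactly the length of the normal clearing phase. Since the policy applies the optimal clearing policy throughout that phase, all remaining packets from the $k$th batch are served, and $U_k = 0$.

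Given part (a), part (b) is immediate from the union bound: $\{U_k > 0\} \subseteq W^k \cup H^k$, and Lemmas \ref{l:w} and \ref{l:h} give $\PR(W^k \cup H^k) \leq 1/(2f_n^{13}) + 1/(2f_n^{13}) = 1/f_n^{13}$. For part (c), $U_k$ cannot exceed the total number of packets in the $k$th batch, and since arrivals are Bernoulli and there are $n^2$ queues over $b$ slots, this total is at most $n^2 b$. The only mild obstacle is bookkeeping: verifying that the identity $s - (b-d) = \ell$ is precisely what makes Theorem \ref{thm:bvn} applicable within the allotted $\ell$ slots, and treating $(b-d)/n$ as an integer in the same spirit as the rest of the paper's rounding conventions. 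Beyond this, the proof is a direct synthesis of Lemmas \ref{l:aa}, \ref{l:w}, \ref{l:h} and Theorem \ref{thm:bvn}.
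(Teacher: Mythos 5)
Your proposal is correct and follows essentially the same route as the paper: use the complement of $W^k$ together with Lemma~\ref{l:aa} to show no offered service is wasted during the round-robin phase, use the complement of $H^k$ to bound the row and column sums of the residual batch-queue matrix by $\ell=d+s-b$, and invoke Theorem~\ref{thm:bvn} to conclude the normal clearing phase empties the batch; parts (b) and (c) are the same union bound and trivial counting argument. One small point worth noting: rather than asserting a per-queue service count of $(b-d)/n$ (which need not be an integer and is only approximately correct for each individual queue), the paper sidesteps any rounding by summing over $j$ directly and using $\sum_j \sigma^{(m)}_{i,j}=1$, giving $\sum_j S^k_{i,j}(b)=b-d$ exactly; since only the row and column sums feed into Theorem~\ref{thm:bvn}, your conclusion is unaffected, but the paper's bookkeeping is a bit cleaner.
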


\begin{proof}
\begin{itemize}
\item[(a)] We assume that neither $W^k$ nor $H^k$ occurs.
Using Eq.~\eqref{eq:qk}, the queue sizes (where we only count packets from the $k$th batch)  at the beginning of the normal clearing period are equal to 
\begin{equation}\label{eq:hat}
Q^k_{i,j}(b+1)=A_{i,j}^k(b)-S^k_{i,j}(b).\end{equation}
Let
$$\hat R_i^k =\sum_j Q^k_{i,j}(b+1),\qquad \hat C_j^k =\sum_i Q^k_{i,j}(b+1).$$
{Now consider a fixed $i$.}
Note that the schedules $\bsigma^{(m)}$ used during the round-robin phase have the property $\sum_j \sigma_{i,j}^{(m)}=1$; that is, each input port is offered exactly one unit of service at each time slot. Furthermore, since event $W^k$ does not occur, Lemma~\ref{l:aa} implies that all queues are positive 
{at the beginning of each slot of} 
the round-robin phase; {that is, $Q^k_{ij}(t+1)>0$, for $t=d,\ldots,b-1$. Therefore, the offered service is never} wasted during the $b-d$ slots of the round-robin phase. It follows that
the total actual service at input port $i$ during the round-robin phase is exactly $b-d$:
$$\sum_j S_{i,j}^k(b)=b-d.$$
Furthermore, since event $H^k$ does not occur, we have $R_i^k\leq s$. 
{Recalling the definition $R_i^k = \sum_j A_{i,j}^k(b)$, and by}
summing both sides of Eq.~\eqref{eq:hat} over all $j$, we obtain
$$\hat R_i^k = R_i^k-\sum_j S_{i,j}^k(b)\leq s-(b-d)=\ell,$$
where $\ell=d+s-b$ is the length of the normal clearing phase.
By a similar argument, we obtain that $\hat C_j^k\leq \ell$, for all $j$.
It then follows from Theorem \ref{thm:bvn} that all the packets (from the $k$th arrival batch) will be cleared during the normal clearing phase, and $U_k=0$.

\item[(b)] If $U_k>0$, then, by part (a), it must be that either event $W^k$ or $H^k$ occurs. The result follows because the probability of each one of these two events is upper bounded by $1/(2f_n^{13})$ (Lemmas \ref{l:w} and \ref{l:h}).

\item[(c)] The number of packets from the $k$th batch that can get backlogged can be no more than the total number of arrivals in the $k$th batch. Since each queue ($n^2$ of them) receives at most one packet at each time slot ($b$ slots), the total number cannot exceed $n^2 b$.
\end{itemize}
\end{proof}

\subsection{Backlog analysis} 
We are now in a position to show that the expected backlog is very small.

\begin{lemma}\label{l:back}
Assuming that $n$ is sufficiently large, 
we have that $\E[B_k]\leq 1$, for all $k$.
\end{lemma}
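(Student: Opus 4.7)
The plan is to treat the recursion $B_{k+1}\leq \max\{0,B_k+U_k-r\}$ of Eq.~\eqref{eq:bevol} as the (inequality form of the) dynamics of a discrete-time G/D/1 queue with i.i.d.\ per-slot arrivals $U_k$ and deterministic per-slot service $r$. The random variables $U_k$ are i.i.d.\ across $k$ because different batches of arrivals are independent and the policy treats each batch symmetrically, and we have $B_0=0$ by assumption. I would therefore apply Theorem~\ref{thm:kingman} with $X=U_k$ and $Y=r$, after first verifying the stability condition $\E[U_k]<r$.

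To estimate the relevant first and second moments of $U_k$, I combine Lemma~\ref{wh}(b), which states $\PR(U_k>0)\leq 1/f_n^{13}$, with the deterministic bound $U_k\leq n^2b$ from Lemma~\ref{wh}(c). Using $b=c_bf_n^2\log f_n$ and $n\leq f_n$, this yields
\[
\E[U_k]\leq \frac{n^2b}{f_n^{13}}\leq \frac{c_b\log f_n}{f_n^{9}},\qquad \E[U_k^2]\leq \frac{(n^2b)^2}{f_n^{13}}\leq \frac{c_b^2\log^2 f_n}{f_n^{5}},
\]
so that $\E[U_k]$ is vastly smaller than $r=c_r f_n\log f_n$, giving stability.

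Plugging these moment estimates into the Kingman bound (in its sharpened, variance-only form), the right-hand side evaluates to $\mathrm{Var}(U_k)/(2(r-\E[U_k]))=O((\log f_n)/f_n^{6})$, which is at most $1$ for all $n$ sufficiently large. Since $B_0=0$, the same bound applies for every finite $k$ by the standard coupling noted in the remark following Theorem~\ref{thm:kingman}.

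The main obstacle is that a direct substitution into the form of the Kingman bound as stated, namely $(m_{2x}+m_{2y}-2\lambda\mu)/(2(\mu-\lambda))$, yields a dominant term of order $(\mu-\lambda)/2\sim r/2=\Theta(f_n\log f_n)$, which is much larger than $1$. To recover the sharper variance-only estimate used above, one must retain the (usually discarded) term $\E[((r-U_k-B_k)^+)^2]$ in the second-moment Lyapunov balance that underlies Theorem~\ref{thm:kingman}, and lower-bound it by $(r-\E[U_k])^2$ via Jensen's inequality, using the identity $\E[(r-U_k-B_k)^+]=r-\E[U_k]$ that holds in steady state. This refinement is the crux of reducing the raw Kingman bound from $\Theta(r)$ down to $O((\log f_n)/f_n^{6})$.
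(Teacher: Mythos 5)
Your proposal is mathematically sound but takes a more laborious route than the paper. The paper first exploits Lemma~\ref{l:first} to replace the service level $r$ by $1$: since $r\geq 1$, Eq.~\eqref{eq:bevol} gives $B_{k+1}\leq\max\{0,B_k+U_k-1\}$, so it suffices to bound the auxiliary process $\hat B_{k+1}=\max\{0,\hat B_k+U_k-1\}$ with $\hat B_0=0$. Applying Theorem~\ref{thm:kingman} \emph{as stated} with $Y\equiv 1$ then gives $\mu=m_{2y}=1$, and since $\lambda=\E[U_k]\leq 1/f_n^7$ and $m_{2x}=\E[U_k^2]\leq 1/f_n$ (via $U_k\leq n^2 b\leq f_n^6$ and $\PR(U_k>0)\leq 1/f_n^{13}$), the bound $(m_{2x}+1-2\lambda)/(2(1-\lambda))$ converges to $1/2$. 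You instead keep $Y\equiv r$, correctly notice that the stated Kingman bound then produces a useless $\Theta(r)$ term, and recover the needed estimate by re-deriving a sharpened, variance-only G/D/1 form $\mathrm{Var}(U_k)/(2(r-\lambda))$ through retaining the positive-part term in the quadratic Lyapunov identity and applying Jensen. That derivation is correct (and a standard refinement), but it requires re-opening and strengthening the cited theorem rather than applying it. The paper's one-line relaxation $r\geq 1$ is the cleaner device; it also sidesteps any need to explicitly argue that $U_k$ are i.i.d.\ (needed in both arguments, via the fact that $U_k$ is a deterministic function of the $k$th batch's arrivals only) and to verify the steady-state identity $\E[(r-U_k-\hat B_k)^+]=r-\E[U_k]$. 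So: same Kingman-bound machinery, but the paper trades service rate $r$ for $1$, while you trade the stated theorem for a sharper variant — both are correct routes to $\E[B_k]\leq 1$.
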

\begin{proof}
Using Eq.~\eqref{eq:bevol}, 
the backlog satisfies
$$B_{k+1}\leq\max\{0,B_k+U_k-r\}\leq \max\{0,B_k+U_k-1\}.$$
Let us define a sequence $\hat B_k$ with the recursion $\hat B_0=0$ and
$$\hat B_{k+1}= \max\{0,\hat B_k+U_k-1\}.$$
We then have $B_k\leq \hat B_k$, so it suffices to derive an upper bound on $\E[\hat B_k]$. 

We use the discrete-time Kingman bound (Theorem \ref{thm:kingman}), where we identify $Z(\tau)$ with $\hat B_k$, $X(\tau)$ with $U_k$, and $Y(\tau)$ with 1.  Using the notation in Theorem \ref{thm:kingman}, 
we have $\mu = 1$, and $m_{2y} = 1$. Furthermore, as in Eq.~\eqref{eq:nbb}, we have
$n^2b\leq f_n^6$
for sufficiently large $n$. Using Lemma \ref{wh},
$$\lambda = \E[U_k] \leq f_n^6 \cdot \PR(U_k > 0) \leq f_n^{6}\cdot\frac{1}{f_n^{13}} = \frac{1}{f_n^{7}},$$ 
{and} $$m_{2x} = \E[U_k^2] \leq f_n^{12} \cdot \PR(U_k > 0) = f_n^{12}\cdot\frac{1}{f_n^{13}} = \frac{1}{f_n}.$$
Then, using the bound in \eqref{eq:kingman-discrete}, we have 
\[
\E[B_k] \leq \E[\hat B_k] \leq \frac{m_{2x} + m_{2y}}{2(\mu - \lambda)} \leq \frac{f_n^{-1}+1} {2(1-f_n^{-7})}.
\]
As $n$ increases, the right-hand side converges to $1/2$ and is therefore bounded above by $1$ when $n$ is
 sufficiently large.
\end{proof}

\subsection{Queue size analysis} 

In this subsection we show that at any time, the sum of the queue sizes is of order $O(nd)$. We fix some time $\tau$ and consider two cases, depending on whether this time belongs to a round-robin phase or not.

\paragraph{Queue sizes during the round-robin phase} 
Suppose that $\tau$ satisfies $kb+d+1\leq \tau\leq (k+1)b$, so that $\tau$ belongs to the round-robin phase of the $k$th service period, and let us look at the queue size $Q_{i,j}(\tau+1)$. 
This queue size may include some packets that arrived during earlier arrival periods and that were backlogged; {their} total expected number (summed over all $i$ and $j$) is $\E[B_k]\leq 1$. 

Let us now turn our attention to packets that belong to the $k$th batch.  Recall that the number of such packets in queue $(i,j)$ at the beginning of the $(t+1)$st slot (equivalently, the end of the $t$th slot) of the $k$th arrival period  is denoted by $Q^k_{i,j}(t+1)$. For $t=d+1,\ldots, b$,  we have, as in Eq.~\eqref{eq:qk},
$$Q^k_{i,j}(t+1)=A^k_{i,j}(t)-S_{i,j}^k(t),$$
and
$$\sum_{i,j}\E[Q^k_{i,j}(t+1)]=n\rho t-\E\Big[\sum_{i,j} S_{i,j}^k(t)\Big].$$
By the same argument as in the proof of Lemma \ref{wh}(a), if event {$W^k$} does not occur,  the service during the round-robin phase is never wasted: a total of $n$ packets are served at each time, and {for $t=d+1,\ldots,b$,} a total of $n(t-d)$ packets are served by the $t$th slot of the $k$th arrival period. Using also the inequality (cf.\ Lemma\ \ref{l:w})
$$1-\PR({W^k})\geq 1-\frac{1}{{2}f_n^{13}}\geq 1-\frac{1}{f_n}=\rho,$$ we obtain
$$\E\Big[\sum_{i,j} S_{i,j}^k(t)\Big]\geq n(t-d)\big(1-\PR({W^k})\big)\geq n \rho(t-d).$$
Therefore,
\begin{equation}\label{eq:qqq}
\sum_{i,j}\E[Q^k_{i,j}(t+1)]\leq n\rho t - n \rho(t-d)=n\rho d\leq nd,\qquad t=d+1,\ldots,b.
\end{equation}
which is an upper bound of the desired form.

\paragraph{Queue sizes outside the round-robin phase} 

Suppose now that $\tau$ satisfies $(k+1)b+1\leq \tau\leq (k+1)b+d$, so that $\tau$ belongs to one of the last two phases of the $k$th service period, and let us look again at the queue size $Q_{i,j}(\tau+1)$. 
As before, we may have some backlogged packets. These are either packets backlogged during the current period (the $k$th one) or in previous periods. Their total expected number (summed over all $i$ and $j$) at any time in this range
 is upper bounded by $\E[B_k+U_k]\leq 2$. 

Let us now turn our attention to packets that belong to the $k$th batch.
Since there are no further arrivals from the $k$th batch from slot $(k+1)b+1$ onwards, the number of such packets is largest at the beginning of slot $(k+1)b+1$.
Their expected value {at that time satisfies}
$$\sum_{i,j}\E\big[Q^k_{i,j}(b+1)\big]\leq nd,$$
where in the  inequality we used Eq.~\eqref{eq:qqq} with $t=b$.

Finally, we need to account for arrivals that belong to the $(k+1)$st arrival batch.
The total number of such accumulated arrivals is largest when we consider the largest value of $\tau$, namely, $\tau=(k+1)b+d$. By that time, we have had a total of $d$ slots of the $(k+1)$st arrival {period}, and a total expected number of arrivals equal to $\rho nd$, which is bounded above by $nd$.

Putting together all of the bounds that we have developed, we see that at any time, the expected total number of packets is bounded above by $2nd +2\leq 3nd$. This being true for all sufficiently large $n$,  establishes Theorem \ref{thm:main}.

\section{Discussion}\label{sec:discussion}
We presented a novel scheduling policy 
for an $n\times n$ input-queued switch. 
In the regime where the system load \jnt{satisfies} $\rho = 1 - 1/n$, 
and the arrival rates \jnt{at the different queues are all equal,}
our policy achieves an upper bound of order $O(n^{2.5} \log n)$
on the \jnt{expected} total queue size, 
a substantial improvement 
upon \jnt{earlier} upper bounds, all of which \jnt{were} of order $O(n^3)$, 
ignoring poly-logarithmic dependence on $n$. \jnt{Our policy is of the batching type. However, instead of waiting until an entire batch has arrived, our policy only waits for 
enough arrivals to take place for the system to exhibit a desired level 
of regularity, and then starts serving the batch.
This idea may be of independent interest.} 

\jnt{Our} policy uses detailed knowledge of the arrival statistics, 
and is heavily dependent on the fact that \jnt{all} arrival rates are \jnt{the same.} 
While we believe that similar policies can be devised 
for arbitrary arrival rates \jnt{(within the regime considered in this paper),}
the policy description and analysis are likely to be more involved. 

\jnt{Finally, for} the regime where $\rho \approx 1 - 1/n$, 
\jnt{there is a} $\Omega(n^2)$ lower bound 
on the \jnt{expected} total queue size \jnt{under any policy}
(see \cite{STZopen}), 
whereas our upper bound is of order $O(n^{2.5} \log n)$. 
It is \jnt{an interesting open question} whether this gap between the upper and lower bound 
can be closed. \jnt{Our policy uses a prespecified sequence of schedules (round-robin) until the entire batch has arrived and then uses an ``adaptive'' sequence of schedules to clear remaining packets after the end of the batch. Within the class of policies of this type, with perhaps different choices of the parameters involved, it appears to be impossible to obtain an upper bound of $O(n^{\alpha})$ for $\alpha<2.5$. Thus, in order to come closer to the $\Omega(n^2)$ lower bound, we will have to use an adaptive 
sequence of schedules early on, before the entire batch has arrived. In fact, if one were to achieve an upper bound close to $O(n^2)$, we would have an approximately constant expected number of packets in each queue. This means that with positive probability, many of the queues will be empty. Therefore, an elaborate policy would be needed to avoid offering service to empty queues and thus avoid queue buildup.  But the analysis of such elaborate policies appears to be a difficult challenge.}

\bibliographystyle{spmpsci}      
\bibliography{bibliography}   

\begin{thebibliography}{10}
\providecommand{\url}[1]{{#1}}
\providecommand{\urlprefix}{URL }
\expandafter\ifx\csname urlstyle\endcsname\relax
  \providecommand{\doi}[1]{DOI~\discretionary{}{}{}#1}\else
  \providecommand{\doi}{DOI~\discretionary{}{}{}\begingroup
  \urlstyle{rm}\Url}\fi

\bibitem{FCgraph}
F.~Chung.
\newblock {\it Complex graphs and networks.}
\newblock American Mathematical Society (2006)

\bibitem{daibala}
J.~G.~Dai and B. Prabhakar. The throughput of switches with and without speed-up.
\newblock Proceedings of IEEE Infocom, pp. 556--564 (2000)

{\bibitem{Hall1998}
M.~Jr.~Hall.
\newblock{\it Combinatorial theory.}
\newblock Wiley-Interscience, 2nd edition (1998)}

\bibitem{harrison:canonical}
J.~M.~Harrison. Brownian models of open processing networks: canonical
  representation of workload.
\newblock The Annals of Applied Probability \textbf{10}, 75--103 (2000).
\newblock \urlprefix\url{http://projecteuclid.org/euclid.aoap/1019737665}.
\newblock Also see \cite{harrison:canonical.corr}

\bibitem{harrison:canonical.corr}
J.~M.~Harrison. Correction to \cite{harrison:canonical}.
\newblock The Annals of Applied Probability \textbf{13}, 390--393 (2003)

\bibitem{KW04}
F.~P.~Kelly and R.~J.~Williams. Fluid model for a network operating under a fair
  bandwidth-sharing policy.
\newblock The Annals of Applied Probability \textbf{14}, 1055--1083 (2004)

\bibitem{K-M}
I.~Keslassy and N.~McKeown. Analysis of scheduling algorithms that provide 100\%
  throughput in input-queued switches.
\newblock Proceedings of Allerton Conference on Communication, Control and
  Computing (2001)

\bibitem{leonardi-delay}
E.~Leonardi, M.~Mellia, F.~Neri and M.~A.~Marsan. Bounds on average delays and
  queue size averages and variances in input queued cell-based switches.
\newblock Proceedings of IEEE Infocom, pp. 1095--1103 (2001)

\bibitem{LD05}
W.~Lin and J.~G.~Dai. Maximum pressure policies in stochastic processing networks
\newblock Operations Research, 
  \textbf{53}, 197--218 (2005)

\bibitem{MAW96}
N.~McKeown, V.~Anantharam and J.~Walrand. Achieving 100\% throughput in an
  input-queued switch.
\newblock Proceedings of IEEE Infocom, pp. 296--302 (1996)

\bibitem{NMC07}
M.~Neely, E.~Modiano and Y.~S.~Cheng. Logarithmic delay for $n\times n$ packet
  switches under the cross-bar constraint.
\newblock IEEE/ACM Transactions on Networking \textbf{15}(3) (2007)

\bibitem{shahkopi}
D.~Shah and M.~Kopikare. Delay bounds for the approximate {M}aximum {W}eight
  matching algorithm for input queued switches.
\newblock Proceedings of IEEE Infocom (2002)


\bibitem{STZopen}
D.~Shah, J.~N.~Tsitsiklis and Y.~Zhong. Optimal scaling of average queue sizes
  in an input-queued switch: an open problem.
\newblock Queueing Systems \textbf{68}(3-4), 375--384 (2011)

\bibitem{SWZ11}
D.~Shah, N.~Walton and Y.~Zhong. Optimal queue-size scaling in switched
  networks.
\newblock Accepted to appear in the Annals of Applied Probability (2014)

\bibitem{SYbook2014}
R.~Srikant and L.~Ying. 
\newblock{\it Communication networks: An optimization, control and stochastic networks perspective.}
\newblock{Cambridge University Press} (2014)

\bibitem{TE92}
L.~Tassiulas and A.~Ephremides. Stability properties of constrained queueing
  systems and scheduling policies for maximum throughput in multihop radio
  networks.
\newblock IEEE Transactions on Automatic Control \textbf{37}, 1936--1948 (1992)

\bibitem{DLK01}
G.~de~Veciana, T.~Lee and T.~Konstantopoulos. Stability and performance
  analysis of networks supporting elastic services.
\newblock IEEE/ACM Transactions on Networking \textbf{9}(1), 2--14 (2001)

\end{thebibliography}


\end{document}